\newif\ifapp
\newcommand{\bbE}{\mathbb{E}}
\newcommand{\bbR}{\mathbb{R}}
\newcommand{\bbS}{\mathbb{S}}
\newcommand{\bbF}{\mathbb{F}}
\newcommand{\caA}{\mathcal{A}}
\newcommand{\caH}{\mathcal{H}}
\newcommand{\caO}{\mathcal{O}}
\newcommand{\caP}{\mathcal{P}}
\newcommand{\caS}{\mathcal{S}}
\newcommand{\caX}{\mathcal{X}}
\newcommand{\caY}{\mathcal{Y}}
\newcommand{\caZ}{\mathcal{Z}}
\newcommand{\rmi}{\mathrm{i}}
\newcommand{\poly}{\mathrm{poly}}
\def\<{\langle}  
\def\>{\rangle}  
\newtheorem{theorem}{Theorem}
\newtheorem{lemma}{Lemma}
\newtheorem{proposition}{Proposition}
\newtheorem{corollary}{Corollary}
\newcommand{\lref}[1]{Lemma~\ref{#1}}
\newcommand{\thref}[1]{Theorem~\ref{#1}}
\newcommand{\coref}[1]{Corollary~\ref{#1}}
\newcommand{\fref}[1]{Fig.~\ref{#1}}
\newcommand{\aref}[1]{Appendix~\ref{#1}}
\newcommand{\bbC}{\mathbb{C}}
\newcommand{\bbI}{\mathbb{I}}
\theoremstyle{plain} 
\theoremstyle{plain} 
\theoremstyle{plain} 
\theoremstyle{plain}
\begin{document}
    \title{Certifying entanglement dimensionality by random Pauli sampling}
    \author{Changhao Yi}
    \affiliation{Department of Physics, Shanghai University, Shanghai 200444, China}
    \date{\today}

\begin{abstract}
      We introduce a Pauli-measurement-based algorithm to certify the Schmidt number of $n$-qubit pure states. Our protocol achieves an average-case sample complexity of $\caO(\mathrm{poly}(n)\chi^2)$, a substantial improvement over the $\caO(2^n \chi)$ worst-case bound. By utilizing local pseudorandom unitaries, we ensure the worst case can be transformed into the average-case with high probability. This work establishes a scalable approach to high-dimensional entanglement certification and introduces a proof framework for random Pauli sampling.
\end{abstract} 

\maketitle

\section{Introduction}

Entanglement is a hallmark of quantum mechanics and a key resource for quantum information processing. Characterizing and detecting entanglement has been a central focus of the quantum information community for decades \cite{guhne2009entanglement}. Beyond simple separability, the Schmidt number—or \textit{entanglement dimensionality}—provides a refined measure of entanglement strength \cite{terhal2000schmidt}. High-dimensional entangled states have recently gained attention for their potential to enhance various tasks, including quantum communication \cite{zhang2025quantum}, channel discrimination \cite{bae2019more}, quantum key distribution \cite{huber2013weak}, and linear-optical fusion \cite{yamazaki2025linear}.

The development of practical protocols for certifying the Schmidt number has recently become a prominent research focus \cite{huang2016high}. Existing approaches include fidelity-based methods \cite{huang2016high,bavaresco2018measurements,morelli2023resource,li2025high}, correlation-matrix techniques \cite{liu2023characterizing,wyderka2023probing,lib2024experimental}, protocols relying on $k$-positive maps \cite{yi2025certifying,mallick2025detecting}, verification operators \cite{wu2025efficient}, and semidefinite programming relaxations \cite{hu2021optimized,dalessandro2025sdp}.
These certification protocols generally fall into two categories: \textit{basis-dependent} and \textit{basis-independent}. Basis-dependent methods, such as fidelity-based criteria, offer dimension-independent sampling overhead but require prior knowledge of the state's Schmidt basis, making them sensitive to local rotations. Conversely, basis-independent protocols, like those using correlation matrices, circumvent the need for basis alignment but often suffer from sampling complexities that scale linearly with system dimension. Furthermore, these methods frequently necessitate high-order \textit{unitary designs} \cite{zhu2017multiqubit, gross2007evenly}, posing significant challenges for experimental implementation.

In this work, we present a basis-independent protocol for certifying the Schmidt number of pure bipartite states, featuring sample complexity with only weak dimension dependence and requiring solely Pauli measurements and two random unitary samples. Our protocol builds on the observation that the rank of a pure state's correlation matrix equals the square of its Schmidt number. Although full reconstruction demands all $d^4 - 1$ matrix entries, where $d$ is the dimension of local Hilbert space, random matrix theory guarantees that $\widetilde{\caO}(d \chi)$ Pauli samples (which corresponds to $\widetilde{\caO}(d^2\chi^2)$ measurement setups) suffice to preserve rank in the worst case, where $\chi$ is the Schmidt number of the target state. Notably, when the Schmidt basis aligns with columns of a Haar-random unitary, merely $\widetilde{\caO}(\chi^2)$ Pauli samples are needed with high probability. These findings highlight the power of \textit{pseudorandom unitary} (PRU) techniques \cite{ji2018pseudorandom,haug2025pseudorandom,schuster2025random,ma2025how} for efficient entanglement certification.

Random Pauli measurement has become a pivotal tool in modern quantum information processing, powering diverse applications including quantum state tomography \cite{gross2010quantum,liu2011universal}, classical shadow tomography \cite{huang2020predicting,king2025triply}, fidelity estimation \cite{flammia2011direct}, distributed fidelity estimation \cite{hinsche2025efficient}, channel estimation \cite{flammia2020efficient}, and entanglement detection \cite{elben2020mixed}. As a central component of the broader randomized measurement framework \cite{huang2020predicting,elben2020mixed,liang2025detecting}, random Pauli sampling has gained widespread popularity. Nevertheless, many of its fundamental limits and capabilities remain underexplored. This work uncovers one such fundamental property, demonstrating that carefully chosen random Pauli bases can dramatically reduce the measurement cost for high-dimensional entanglement certification. These findings not only advance practical Schmidt-number certification but also open new avenues for harnessing pseudorandom techniques in randomized measurement protocols, paving the way for scalable characterization of complex quantum systems.

\begin{figure}
    \centering
    \includegraphics[width=0.8\linewidth]{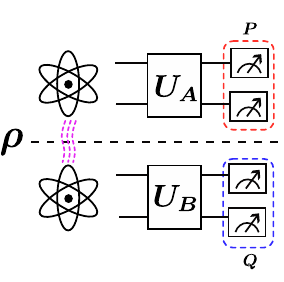}
    \caption{Schmidt number certification by random Pauli sampling. Here $\rho$ is a bipartite quantum state on $\caH_{AB}$; $U_A, U_B$ are two samples of local Haar random unitaries; $P,Q$ are two Pauli operators from the random set $\caS$. The figure shows how to compute one entry of the projected CM $T_\caS$. Note that we use the same random unitary $U_A \otimes U_B$ to compute all entries of $T_{\caS}$.}
    \label{fig:CM}
\end{figure}

\section{Schmidt number certification}

Consider a bipartite pure state \(|\psi\rangle\) supported on \(\caH_{AB} = \mathcal{H}_A \otimes \mathcal{H}_B\) with \(\dim(\mathcal{H}_A) = \dim(\mathcal{H}_B) = d\) and Schmidt rank \(\chi\):
\[
|\psi\rangle = \sum_{i=0}^{\chi-1} \sqrt{\lambda_i} \, |l_i\rangle \otimes |r_i\rangle, \quad \lambda_i > 0.
\]
The set \(\{\lambda_i\}_{i=0}^{\chi-1}\) is the Schmidt spectrum, and \(\{|l_i\rangle\}_{i=0}^{\chi-1}\), \(\{|r_i\rangle\}_{i=0}^{\chi-1}\) are the Schmidt bases.

Let \(\mathcal{P} = \{P_j\}_{j=0}^{d^2-1}\) be an operator basis of Hermitian operators on a \(d\)-dimensional system (e.g., Pauli operators including the identity), and $\caP_0 := \caP/\{I\}$. We define the full correlation matrix (CM) of the state as
\begin{align*}
    T &= \sum_{P,Q \in \mathcal{P}} T_{P,Q} \, |e_P\rangle\langle e_Q|, \\
    T_{P,Q} &= \<\psi|(P \otimes Q)|\psi\>,
\end{align*}
where \(\{|e_P\rangle\}\) is an orthonormal vector basis in \(\mathbb{C}^{d^2}\).

A key property is the following:
\begin{lemma}\label{lem: Tdecomposition}
    If \(|\psi\rangle\) has Schmidt rank \(\chi\), then \(\rank(T) = \chi^2\).
\end{lemma}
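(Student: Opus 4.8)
The plan is to write $T$ explicitly in terms of the Schmidt decomposition and recognize it as a Gram-type matrix whose rank can be read off directly. Writing $|\psi\rangle = \sum_i \sqrt{\lambda_i}\,|l_i\rangle\otimes|r_i\rangle$, one computes
\[
T_{P,Q} = \sum_{i,j}\sqrt{\lambda_i\lambda_j}\,\<l_i|P|l_j\>\,\<r_i|Q|r_j\>.
\]
Define, for each ordered pair $(i,j)$, the vector $v_{ij}\in\bbC^{d^2}$ with components $(v_{ij})_P = \<l_i|P|l_j\>$, and similarly $w_{ij}$ with $(w_{ij})_Q = \<r_i|Q|r_j\>$. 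Then $T = \sum_{i,j}\sqrt{\lambda_i\lambda_j}\,|v_{ij}\rangle\langle \overline{w_{ij}}|$ (up to conjugation bookkeeping that I would fix carefully), so $T$ factors as $T = L\,D\,R^\dagger$ where the columns of $L$ are the $v_{ij}$, the columns of $R$ are the $w_{ij}$, and $D$ is the diagonal matrix of the $\sqrt{\lambda_i\lambda_j}$ over the $\chi^2$ pairs $(i,j)$ with $i,j<\chi$.

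First I would argue $\rank(T)\le \chi^2$: the sum defining $T$ has only $\chi^2$ terms, each a rank-one operator, so this bound is immediate. The substantive direction is $\rank(T)\ge\chi^2$, for which I need the two families $\{v_{ij}\}_{i,j<\chi}$ and $\{w_{ij}\}_{i,j<\chi}$ to each be linearly independent (and the middle matrix $D$ to be invertible, which holds since all $\lambda_i>0$). Linear independence of $\{v_{ij}\}$ is where the hypothesis that $\caP$ is an \emph{operator basis} enters: the map sending an operator $X$ supported on $\spa\{|l_i\rangle\}$ to the vector $(\<e_P|X\rangle)_P = (\tr(P^\dagger X))_P$ is injective because $\caP$ spans all operators; applied to $X = |l_i\rangle\langle l_j|$ this shows the $v_{ij}$ are images of the linearly independent operators $|l_i\rangle\langle l_j|$ under an injective linear map, hence independent. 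The same argument handles $\{w_{ij}\}$ on the $B$ side. Then a standard fact about products — if $L$ has full column rank $\chi^2$, $D$ is invertible $\chi^2\times\chi^2$, and $R$ has full column rank $\chi^2$, then $LDR^\dagger$ has rank $\chi^2$ — finishes the lower bound.

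The main obstacle is purely bookkeeping rather than conceptual: getting the complex conjugates and the ordering of indices right so that $T$ genuinely appears as $L D R^\dagger$ with the stated factors, and making sure the "injective linear map" argument is stated for the correct subspace of operators (those with row and column support in the Schmidt subspaces). Once the factorization is set up correctly, both inequalities are short. An alternative packaging, which I might use instead to avoid conjugation headaches, is to introduce the $\chi^2\times\chi^2$ Gram matrix $G_{(ij),(kl)} = \sum_P \<l_i|P|l_j\>\overline{\<l_k|P|l_l\>}$ and note that $\caP$ being a basis forces $G$ to be (proportional to) the identity up to a change of basis, hence nonsingular; combined with the analogous statement on $B$, the rank of $T$ equals the rank of the diagonal $\sqrt{\lambda_i\lambda_j}$ matrix, which is $\chi^2$.
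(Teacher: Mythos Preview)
Your proposal is correct and follows essentially the same approach as the paper: both factor $T$ as $L\,\Lambda\,R$ with $\Lambda$ the diagonal $\chi^2\times\chi^2$ matrix of $\sqrt{\lambda_i\lambda_j}$, and both deduce that $L,R$ have full column rank from the fact that $\caP$ is an operator basis. The only cosmetic difference is that the paper extends $L,R$ to full $d^2\times d^2$ matrices $U_L,U_R$ and shows they are \emph{unitary} via the SWAP identity $\sum_P P\otimes P = d\cdot\mathrm{SWAP}$ (which is exactly your ``alternative packaging'' Gram-matrix computation), whereas your primary argument phrases the same content as injectivity of the map $X\mapsto(\tr(P^\dagger X))_P$.
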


In our protocol, we randomly select a subset \(\mathcal{S} \subset \caP_0\) consisting of $K$ Pauli operators and form the projected CM
\[
T_\mathcal{S} = \sum_{P,Q \in \mathcal{S} \cup \{I\}} T_{P,Q} \, |e_P\rangle\langle e_Q|.
\]
Since \(T_\mathcal{S}\) is a principal submatrix of \(T\), we have \(\rank(T_\mathcal{S}) \leq \rank(T) = \chi^2\). Thus, observing \(\rank(T_\mathcal{S}) > \chi^2\) certifies that the Schmidt number of $|\psi\>$ exceeds \(\chi\).

Our main result characterizes the size of \(\mathcal{S}\) needed to preserve full rank with high probability.

\begin{theorem}[Informal]
    For a Schmidt number-\(\chi\) pure state on \(d \times d\) systems, recovering \(\rank(T_\mathcal{S}) = \chi^2\) requires \(\mathcal{O}(d \chi^2)\) Pauli operators in the worst case, but only \(\widetilde{\mathcal{O}}(\chi^2)\) in typical cases.
\end{theorem}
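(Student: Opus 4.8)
The plan is to reduce the statement to one about a random column subset of a fixed rank-$\chi^2$ matrix and then apply matrix concentration. First I would make the factorization behind \lref{lem: Tdecomposition} explicit: with $(\mathsf A_P)_{mk}:=\sqrt{\lambda_m\lambda_k}\,\<l_m|P|l_k\>$ and $(\mathsf C_Q)_{mk}:=\<r_m|Q|r_k\>$ for $m,k\in\{0,\dots,\chi-1\}$, one checks $T_{P,Q}=\sum_{m,k}(\mathsf A_P)_{mk}(\mathsf C_Q)_{mk}$, so $T=\mathsf A^{\mathsf T}\mathsf C$ where $\mathsf A,\mathsf C\in\bbC^{\chi^2\times d^2}$ have columns $\mathsf A_P,\mathsf C_Q$. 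Restricting columns to $\caS\cup\{I\}$ gives $T_{\caS}=\mathsf A_{\caS}^{\mathsf T}\mathsf C_{\caS}$, and a short rank argument shows $\rank(T_{\caS})=\chi^2=\rank(T)$ whenever $\mathsf A_{\caS}$ and $\mathsf C_{\caS}$ both have full row rank $\chi^2$, i.e. $\mathsf A_{\caS}\mathsf A_{\caS}^{\dagger}\succ 0$ and $\mathsf C_{\caS}\mathsf C_{\caS}^{\dagger}\succ 0$. So it suffices to show that, for a uniformly random $K$-subset $\caS\subset\caP_0$, the random Hermitian matrices $\sum_{P\in\caS\cup\{I\}}\mathsf A_P\mathsf A_P^{\dagger}$ and $\sum_{Q\in\caS\cup\{I\}}\mathsf C_Q\mathsf C_Q^{\dagger}$ are positive definite with high probability.

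The second step identifies the leverage scores that control the matrix-Chernoff bound for random column sampling. Using the completeness identity $\sum_{P\in\caP}(P)_{ab}\overline{(P)_{cd}}=d\,\delta_{ac}\delta_{bd}$ (equivalently $\sum_{P}PXP=d\,\mathrm{Tr}(X)I$) together with orthonormality of the Schmidt vectors, one gets $\mathsf A\mathsf A^{\dagger}=d\,\diag(\lambda_m\lambda_k)$, $\mathsf C\mathsf C^{\dagger}=d\,I_{\chi^2}$, and — the crucial point — the leverage score of column $P$ equals
\[
\ell_P=\mathsf A_P^{\dagger}(\mathsf A\mathsf A^{\dagger})^{-1}\mathsf A_P=\tfrac1d\,\|\Pi_A P\,\Pi_A\|_F^2,\qquad \Pi_A:=\textstyle\sum_{i=0}^{\chi-1}|l_i\>\<l_i|,
\]
which is \emph{independent of the Schmidt spectrum} $\{\lambda_i\}$ (the same holds with $\Pi_B$ on the $\mathsf C$ side). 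Since $\rank(\Pi_A)=\chi$ and $\|P\|_{\mathrm{op}}=1$, this gives the \emph{deterministic} bound $\ell_P\le\chi/d$, valid for every pure state and every Pauli, and $\sum_P\ell_P=\chi^2$. A matrix Chernoff bound for column sampling without replacement then says that $K\gtrsim d^2\bigl(\max_{P\in\caP_0}\ell_P\bigr)\log(\chi/\delta)$ Paulis make both Gram matrices positive definite with probability $\ge 1-\delta$. With $\max_P\ell_P\le\chi/d$ this is $K=\caO(d\chi\log\chi)=\caO(d\chi^2)$, the worst-case half of the theorem.

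For the typical case, precomposing with $U_A\otimes U_B$ (as in \fref{fig:CM}) is the same as replacing each Pauli $P$ on side $A$ by $U_A^{\dagger}PU_A$, i.e. replacing $\Pi_A$ by the Haar-random rank-$\chi$ projector $\Pi_A'=U_A\Pi_A U_A^{\dagger}$. Writing $\Pi_A'=V_AV_A^{\dagger}$ with $V_A$ a $d\times\chi$ random isometry with columns $v_1,\dots,v_\chi$, the leverage score becomes $\ell_P=\tfrac1d\|V_A^{\dagger}PV_A\|_F^2\le\tfrac{\chi^2}{d}\max_{a,b}|\<v_a|P|v_b\>|^2$. Each $\<v_a|P|v_b\>$ is an overlap of (correlated) Haar-random unit vectors, hence has Gaussian tails at scale $1/\sqrt d$; a union bound over the $\chi^2$ index pairs, the $d^2-1$ nontrivial Paulis, and the two subsystems gives $\max_{P\in\caP_0}\ell_P=\widetilde{\caO}(\chi^2/d^2)$ with high probability over $U_A,U_B$. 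Plugging this into the same matrix Chernoff estimate $K\gtrsim d^2(\max_P\ell_P)\log(\chi/\delta)$ yields $K=\widetilde{\caO}(\chi^2)$. Only entrywise tail bounds on $U_A,U_B$ were used, so the Haar unitaries may be replaced by local PRUs, which is what makes the protocol implementable.

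The step I expect to be the main obstacle is the typical-case leverage estimate: one must control the random matrices $V_A^{\dagger}PV_A$ \emph{simultaneously} for all $d^2-1$ nontrivial Paulis, so the entrywise tail bound has to be strong enough to survive the double union bound over the Pauli label and the $\chi^2$ matrix entries, and one has to verify that the needed estimates still hold for the PRU ensemble rather than exact Haar randomness (this is where the $\mathrm{poly}(n)$ factors hidden in $\widetilde{\caO}$ enter). By comparison, the reduction to the two Gram matrices, the identity $\ell_P=\tfrac1d\|\Pi_A P\Pi_A\|_F^2$, and the bound $\ell_P\le\chi/d$ are short computations, and the matrix Chernoff step is essentially off-the-shelf.
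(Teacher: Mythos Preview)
Your approach is essentially the paper's: your leverage score $\ell_P=\tfrac{1}{d}\|\Pi_A P\Pi_A\|_F^2$ coincides with the paper's key parameter $\mu_0/d^2$, your matrix-Chernoff step is the paper's appeal to Vershynin's isotropic-rows theorem via \coref{coro:isotropic}, and your union-bound tail estimate $\max_{P,a,b}|\langle v_a|P|v_b\rangle|^2=\widetilde{\caO}(1/d)$ in the Haar case is exactly \lref{lem:muU}, proved there by Levy's lemma. The only cosmetic differences are that the paper factors out the Schmidt spectrum as a separate diagonal $\Lambda_\Psi$ and projects away the identity direction $|e_I\rangle$ up front, whereas you absorb the weights into $\mathsf A$ and then observe that the leverage scores are spectrum-independent --- a nice remark, but mathematically the same reduction.
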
 

In this context, a "typical" state implies that the Schmidt bases $\{|l_i\rangle\}_{i=0}^{\chi-1}$ and $\{|r_i\rangle\}_{i=0}^{\chi-1}$ resemble the first $\chi$ columns of a Haar-random unitary, with Schmidt coefficients $\{\lambda_i\}$ of comparable magnitude. Under these conditions, the sample complexity is primarily governed by the anticoncentration of the Schmidt vectors. Specifically, if $|\langle l_i | P | l_j \rangle|$ and $|\langle r_i | P | r_j \rangle|$ are $\Omega(d^{-1/2})$ for the majority of Pauli operators $P$ and index pairs $(i,j)$, then a sample size of $K = \widetilde{\mathcal{O}}(\chi^2)$ is sufficient for recovery. Conversely, if the Schmidt basis vectors are aligned with the computational basis, the system represents a worst-case scenario where the required number of Pauli operators scales as $\mathcal{O}(d \chi)$. We can bypass this worst-case scaling by applying local random rotations. This ensures the Schmidt basis vectors are anticoncentrated, allowing the protocol to work with far fewer measurements.

\section{Proof idea}

Observe that the projected CM has the following decomposition:
\begin{gather}
    T_\caS = \widetilde{U}_R^\dag \Lambda_\Psi \widetilde{U}_L,\\
    \widetilde{U}_L := \sum_{P\in \caS\cup\{I\}}\sum_{i,j=0}^{d-1}\<l_i|P|l_j\>\left(|e_{i}\> \otimes|e_j\>\right)\<e_P|,
\end{gather}
where $\{|e_{i}\>\}_{i=0}^{d-1}$ is a vector basis for a Hilbert space of dimension $d$. The other matrix $\widetilde{U}_R$ can be defined in a similar way. Both $\sqrt{\Lambda_\Psi}\widetilde{U}_L$ and $\sqrt{\Lambda_\Psi}\widetilde{U}_L$ have maximal rank $\chi^2$ provided that $d \gg \chi^2$. If rank$(\sqrt{\Lambda_\Psi}\widetilde{U}_L) = \mathrm{rank}(\sqrt{\Lambda_\Psi}\widetilde{U}_R) = \chi^2$,  then rank$(T_\caS) = \chi^2$. Hence, the question becomes how large $\caS$ should be to keep rank$(\sqrt{\Lambda_\Psi}\widetilde{U}_L) = (\sqrt{\Lambda_\Psi}\widetilde{U}_R) = \chi^2$ with high probability. We will focus on $\sqrt{\Lambda_\Psi}\widetilde{U}_L$ henceforth. 

Introduce $|e_{ij}\> := |e_i\> \otimes|e_j\>$ for simplification. Consider the space span by $\{|e_{ij}\>\}_{i,j=0}^{\chi-1}$. The following vector forms a frame on this space:
\begin{equation}
    |v_P\> := \frac{1}{\sqrt{d}}\sum_{i,j=0}^{\chi-1}\<l_i|P|l_j\> |e_{ij}\>\quad P\in\caP.
\end{equation}
That is,
\begin{equation}
    \sum_{P\in\caP}|v_P\>\<v_P| = \bbI_\Psi := \sum_{i,j=0}^{\chi-1}|e_{ij}\>\<e_{ij}|.
\end{equation}
Moreover,
\begin{equation}
    \mathrm{rank}\left(\sqrt{\Lambda_\Psi}\widetilde{U}_L\right) = \mathrm{rank}\left(\{|v_P\>\}_{P\in\caS\cup\{I\}}\right).
\end{equation}
Up to this step, the question is reduced to studying the distribution of rank$\left(\{|v_P\>\}_{P\in\caS\cup\{I\}}\right)$. If the rank of these random vectors equal $\chi^2$, then the Schmidt number certification protocol succeeds.

Note that $P = I$ is special in that we know the value of $|v_I\>$ without computing, and $|v_I\>\in$ span$(\{|e_{ij}\>\}_{i,j=0}^{\chi-1})$. Therefore, we define
\begin{align}
    |\widetilde{v}_P\> &:= (\bbI_\Psi - |e_I\>\<e_I|)|v_P\>,\\
    |e_I\> &:= \frac{1}{\sqrt{\chi}}\sum_{i=0}^{\chi-1}|e_{ii}\> = \sqrt{\frac{d}{\chi}}|v_I\>,
\end{align}
so that
\begin{equation}
    \sum_{P\in\caP}|\widetilde{v}_P\>\<\widetilde{v}_P| = \bbI_\Psi - |e_I\>\<e_I|.
\end{equation}
and rank$(\{|v_P\>\}_{P\in\caS\cup\{I\}}) = $ rank$(\{|\widetilde{v}_P\>\}_{P\in\caS}) + 1$ almost surely. 

The quantitative relation between the rank of random vectors and the size of $\caS$ can be characterized by the following theorem. The proof can be found in \aref{app:theoremmain}.
\begin{theorem}\label{theorem:main}
    Suppose $\caS$ is the set of $K$ random Pauli operators, and 
 \begin{equation}
     \mu_0:= d\max_{P\in\caP_0}\sum_{i,j=0}^{\chi-1}|\<l_i|P|l_j\>|^2.
 \end{equation}
    If $K = \caO(\mu_0\log(\chi/\eta))$, then with failure probability at most $\eta$, we have
    \begin{equation}
        \mathrm{rank}\left(\{|\widetilde{v}_P\>\}_{P\in\caS}\right) = \chi^2-1.
    \end{equation}
\end{theorem}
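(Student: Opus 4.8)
\section*{Proof proposal for Theorem~\ref{theorem:main}}

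The plan is to turn the rank statement into a lower bound on the minimum eigenvalue of a sum of independent rank-one random matrices and then invoke a matrix Chernoff inequality. Model the random set as $\caS=\{P_1,\dots,P_K\}$ with the $P_k$ drawn i.i.d.\ uniformly from $\caP_0$, and set $X_k:=|\widetilde v_{P_k}\>\<\widetilde v_{P_k}|$. Since $|v_P\>\in\operatorname{range}(\bbI_\Psi)$ for every $P$, we have $|\widetilde v_P\> = |v_P\> - |e_I\>\<e_I|v_P\>\in V$, where $V:=\operatorname{range}(\bbI_\Psi-|e_I\>\<e_I|)$ has dimension $\chi^2-1$. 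Working inside $V$, the event $\rank(\{|\widetilde v_P\>\}_{P\in\caS})=\chi^2-1$ is exactly the event $\lambda_{\min}\!\big(\sum_{k=1}^K X_k\big|_V\big)>0$, so it suffices to make this minimum eigenvalue positive with probability at least $1-\eta$.

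The first step is the expectation. From $|e_I\>=\sqrt{d/\chi}\,|v_I\>$ we get $(\bbI_\Psi-|e_I\>\<e_I|)|v_I\>=0$, i.e.\ $|\widetilde v_I\>=0$; together with the frame identity $\sum_{P\in\caP}|\widetilde v_P\>\<\widetilde v_P|=\bbI_\Psi-|e_I\>\<e_I|$ this gives $\sum_{P\in\caP_0}|\widetilde v_P\>\<\widetilde v_P|=\Pi_V$, the orthogonal projector onto $V$. Hence $\bbE[X_k]=\tfrac1{d^2-1}\Pi_V$ and $\mu_{\min}:=\lambda_{\min}\!\big(\sum_k\bbE[X_k]\big|_V\big)=K/(d^2-1)$. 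The second step is the uniform operator-norm bound on the summands: $\lambda_{\max}(X_k)=\|\widetilde v_{P_k}\|^2\le\|v_{P_k}\|^2=\tfrac1d\sum_{i,j=0}^{\chi-1}|\<l_i|P_k|l_j\>|^2\le \mu_0/d^2=:R$ by the definition of $\mu_0$.

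The third step is the matrix Chernoff lower-tail bound (Tropp) applied to independent PSD summands on the $(\chi^2-1)$-dimensional space $V$: letting the deviation parameter tend to $1^-$,
\[
\Pr\!\left[\lambda_{\min}\!\Big(\sum_{k=1}^K X_k\Big|_V\Big)=0\right]\ \le\ (\chi^2-1)\,e^{-\mu_{\min}/R}\ \le\ (\chi^2-1)\,e^{-K/\mu_0},
\]
and requiring the right-hand side to be at most $\eta$ yields $K\ge\mu_0\ln\!\big((\chi^2-1)/\eta\big)=\caO(\mu_0\log(\chi/\eta))$, which is the claimed bound (the variant with deviation $1/2$ gives the same estimate up to a constant and avoids the limiting argument). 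I would also remark that the frame trace normalization $\sum_{P\in\caP}\|v_P\|^2=\chi^2$ forces $\mu_0=d^2\max_{P\in\caP_0}\|v_P\|^2\ge d^2(\chi^2-\chi/d)/(d^2-1)\gtrsim\chi^2$, so the prescribed $K$ comfortably exceeds $\chi^2-1$ and there is no obstruction coming from the sample size itself; the conclusion then propagates to $T_\caS$ via the decomposition $T_\caS=\widetilde U_R^\dag\Lambda_\Psi\widetilde U_L$ and the identical statement for $\widetilde U_R$.

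The probabilistic core here is standard, so the real care is in the bookkeeping. The first delicate point is showing that dropping $I$ from the frame is harmless, which is exactly the identity $|\widetilde v_I\>=0$ used above. The second is matching the sampling model of the theorem: if $\caS$ is intended as a uniformly random size-$K$ subset (sampling without replacement) rather than i.i.d.\ draws, I would reduce to the i.i.d.\ case by noting that the probability that a uniformly random $m$-subset spans $V$ is nondecreasing in $m$, while $K$ i.i.d.\ draws span $V$ iff their (at most $K$) distinct elements do, and conditioned on their number those distinct elements are a uniform subset; hence the failure probability of a size-$K$ subset is at most that of $K$ i.i.d.\ draws, which the estimate above already controls. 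A final implicit assumption, inherited from the surrounding discussion, is the regime $d\gg\chi^2$, needed so that $V$ genuinely has dimension $\chi^2-1$ and $\sqrt{\Lambda_\Psi}\widetilde U_L$ can attain rank $\chi^2$.
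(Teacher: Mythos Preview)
Your proof is correct and follows essentially the same route as the paper: rescale $|\widetilde v_P\>$ so that the random vectors are isotropic on the $(\chi^2-1)$-dimensional space $V=\operatorname{range}(\bbI_\Psi-|e_I\>\<e_I|)$, bound their norm by $\sqrt{\mu_0}$, and conclude via a matrix concentration inequality that $K=\caO(\mu_0\log(\chi/\eta))$ i.i.d.\ draws span $V$ with probability at least $1-\eta$. The only cosmetic difference is that the paper packages the concentration step through Vershynin's smallest-singular-value bound for matrices with isotropic rows (Theorem~5.41 in \cite{vershynin2010introduction}, restated as \coref{coro:isotropic}), whereas you invoke Tropp's matrix Chernoff inequality directly; for rank-one PSD summands these are equivalent, and indeed the paper cites Tropp's Theorem~1.6.2 as an alternative reference.
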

In the worst case, we have $\mu_0 = \caO(d)$. For instance, if $\{|l_i\>\}_{i=0}^{d-1}$ is the computational basis, then $\mu_0 = \caO(d\chi)$ and $K = \widetilde{\caO}(d\chi)$. We give a detail analysis about the sample complexity of this situation in \aref{app:computational}. 

The next corollary demonstrates that when the Schmidt basis vectors exhibit sufficient delocalization ($\mu_0 = \caO(\poly\log(d)\chi^2)$), the protocol achieves success with merely $K = \widetilde{\caO}(\chi^2)$ copies without requiring additional modifications. The proof can be found in \aref{app:corollarymain}.

\begin{corollary}\label{coro:main}
    Suppose $\caS$ is the set of $K$ random Pauli operators, and $U$ is a Haar random unitary. 
    If
    \begin{equation}
        K = \caO\left(\chi^2 \poly\log(d\chi^2/\eta)\log(\chi/\eta)\right),
    \end{equation}
    then with failure probability at most $\eta$, we have
    \begin{equation}
        \mathrm{rank}\left(\{|\widetilde{v}_{U^\dag PU}\>\}_{P\in\caS}\right) = \chi^2-1.
    \end{equation}
\end{corollary}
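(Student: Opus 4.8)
The plan is to derive \coref{coro:main} from \thref{theorem:main} by showing that a Haar-random $U$ makes the coherence parameter $\mu_0$ small with high probability. Conjugating every Pauli by $U$ is equivalent to replacing the left Schmidt vectors $\{|l_i\>\}_{i=0}^{\chi-1}$ by the rotated ones $|m_i\> := U|l_i\>$, since $\<l_i|U^\dag P U|l_j\> = \<m_i|P|m_j\>$ and the definition of $|\widetilde v_P\>$ depends on the Schmidt vectors only through these matrix elements; the $\{|m_i\>\}$ are orthonormal and, for Haar $U$, are distributed as the first $\chi$ columns of a Haar-random unitary. Thus the quantity controlling the sample size becomes
\[
\mu_0(U) = d\max_{P\in\caP_0}\sum_{i,j=0}^{\chi-1}\left|\<m_i|P|m_j\>\right|^2 = d\max_{P\in\caP_0}\|\Pi P\Pi\|_F^2 ,
\]
where $\Pi := \sum_{i=0}^{\chi-1}|m_i\>\<m_i|$ is a uniformly random rank-$\chi$ orthogonal projector on $\bbC^d$. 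So everything reduces to bounding $d\max_{P\in\caP_0}\|\Pi P\Pi\|_F^2$ with failure probability at most $\eta/2$.

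First I would compute the mean. Writing $\|\Pi P\Pi\|_F^2 = \sum_{i,j<\chi}|\<m_i|P|m_j\>|^2$ and splitting into $\chi(\chi-1)$ off-diagonal and $\chi$ diagonal terms, I use the first and second moments of a single Haar-random vector, the orthonormality of distinct $|m_i\>,|m_j\>$, and $\mathrm{Tr}(P)=0$, $\mathrm{Tr}(P^2)=d$ for $P\in\caP_0$, to get $\bbE\,|\<m_i|P|m_j\>|^2 = d/(d^2-1)$ for $i\neq j$ and $1/(d+1)$ for $i=j$. Summing, $\bbE\|\Pi P\Pi\|_F^2 = \caO(\chi^2/d)$, so by Jensen $\bbE\|\Pi P\Pi\|_F \le \caO(\chi/\sqrt d)$.

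Next is the concentration step. Parametrize $\Pi = V\Pi_0 V^\dag$ with $V$ Haar on $U(d)$ and $\Pi_0$ a fixed rank-$\chi$ projector, so that $\|\Pi P\Pi\|_F = f_P(V) := \|\Pi_0 V^\dag P V\Pi_0\|_F$. The map $f_P$ is Lipschitz on $U(d)$ in the Hilbert--Schmidt metric with a \emph{universal} constant: writing $V^\dag P V - V'^\dag P V' = V^\dag P(V-V') + (V-V')^\dag P V'$ and using $\|AB\|_F\le\|A\|_\infty\|B\|_F$, $\|\Pi_0\|_\infty = 1$, $\|P\|_\infty = 1$ gives $|f_P(V)-f_P(V')|\le 2\|(V-V')\Pi_0\|_F \le 2\|V-V'\|_F$. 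Concentration of measure on $U(d)$ (log-Sobolev on the unitary group) then yields $\Pr[\,f_P(V) > \bbE f_P + t\,]\le 2e^{-c d t^2}$ for a universal $c>0$. Choosing $t = \Theta(\sqrt{\log(d^2/\eta)/d})$ makes the right side at most $\eta/(2d^2)$, so a union bound over the $d^2-1$ Paulis in $\caP_0$ gives, except with probability $\eta/2$,
\[
\mu_0(U) = d\max_{P\in\caP_0}f_P(V)^2 \le d\left(\caO(\chi/\sqrt d)+\caO(\sqrt{\log(d^2/\eta)/d})\right)^2 = \caO\!\left(\chi^2 + \log(d^2/\eta)\right) = \caO\!\left(\chi^2\,\poly\log(d/\eta)\right).
\]
Conditioning on this event and invoking \thref{theorem:main} with failure parameter $\eta/2$ requires only $K = \caO(\mu_0\log(\chi/\eta)) = \caO(\chi^2\,\poly\log(d/\eta)\,\log(\chi/\eta))$ random Paulis, which lies inside the claimed bound; a final union bound over the two bad events gives overall failure probability at most $\eta$.

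The main obstacle is the concentration estimate, specifically verifying that the Lipschitz constant of $f_P$ in the Hilbert--Schmidt metric is a genuine universal constant, free of $d$ and $\chi$. This is exactly what places the full dimension $d$ in the sub-Gaussian exponent and forces the fluctuation $t$ to decay like $d^{-1/2}$, which in turn is what lets us replace the worst-case bound $\mu_0 = \caO(d\chi)$ by $\mu_0 = \widetilde{\caO}(\chi^2)$; if the exponent scaled only like $\chi\,t^2$ (as it would if one tried to control $\|\Pi P\Pi\|_F^2$ directly, a function with Lipschitz constant $\caO(\sqrt\chi)$), the fluctuation would reintroduce a $\sqrt d$ factor and the improvement would be lost. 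A secondary, more routine difficulty is the Haar-moment computation of $\bbE\|\Pi P\Pi\|_F^2$ done uniformly over all $P\in\caP_0$, in particular the diagonal ($i=j$) contributions, which involve the second moment of one Haar-random vector rather than of two orthonormal ones.
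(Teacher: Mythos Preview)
Your proposal is correct and follows the paper's high-level strategy exactly: bound the coherence parameter $\mu_0(U)$ with high probability over the Haar choice of $U$, then plug into \thref{theorem:main} and union-bound the two failure events. The difference lies only in how the concentration step is carried out.

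The paper decomposes $\mu_U$ into the individual terms $|\<m_i|P|m_j\>|^2$ and controls each one separately. It samples the orthonormal vectors sequentially---$|m_1\>$ Haar on $\bbC^d$, then $|m_2\>$ Haar on its orthogonal complement $\bbC^{d-1}$, and so on---applying Levy's lemma on the sphere (\lref{lem:levy}) to the $1$- and $2$-Lipschitz scalar functions $w\mapsto|\<v|P|w\>|$ and $v\mapsto|\<v|P|v\>|$, and then takes a union bound over the $\chi(\chi+1)/2$ index pairs in addition to the $d^2-1$ Paulis; this extra union bound is exactly what produces the $\chi^2$ inside the $\poly\log(d\chi^2/\eta)$ of the stated corollary. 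Your route instead packages all $\chi^2$ terms into the single function $f_P(V)=\|\Pi_0 V^\dag P V\Pi_0\|_F$ on the full unitary group $U(d)$, verifies it is $2$-Lipschitz in the Hilbert--Schmidt metric uniformly in $\chi$, and appeals to concentration of Haar measure on $U(d)$. This is cleaner, avoids the sequential conditioning on complements, and eliminates the $\chi^2$ from the union bound (you get $\poly\log(d/\eta)$, which sits inside the stated bound). The trade-off is that you invoke log-Sobolev on $U(d)$ rather than the more elementary Levy lemma on spheres; and your observation that one must work with $\|\Pi P\Pi\|_F$ rather than $\|\Pi P\Pi\|_F^2$ to keep the Lipschitz constant $\chi$-free is the key point that makes the unitary-group route succeed.
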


 In cases where delocalization is insufficient ($\mu_0 = \widetilde{\caO}(d\chi)$), applying random local unitaries beforehand is necessary to ensure low Pauli sampling cost. See \fref{fig:CM} for the illustration. In practice, instead of generating true Haar-random unitaries, one can employ computationally indistinguishable pseudorandom unitaries (PRUs). Recent results \cite{schuster2025random} established that PRUs can be constructed on an $n$-qubit system using quantum circuits of depth $\mathcal{O}(\mathrm{poly}(\log n))$ in one-dimensional architectures.

\section{Robustness to noises}

For each $P \in \caP_0$, accurately estimating $  \operatorname{Tr}(\rho P)  $ within additive error $  \varepsilon  $ with high probability requires a measurement budget of $  \mathcal{O}(\varepsilon^{-2})  $. Moreover, the deviation between the empirical mean and the true expectation value is a sub-Gaussian random variable. We therefore denote the measurement-induced error by $  \delta T_{\mathcal{S}}  $, so that the experimentally obtained CM is
$$\widetilde{T}_{\mathcal{S}} = T_{\mathcal{S}} + \delta T_{\mathcal{S}},$$
where each entry of $  \delta T_{\mathcal{S}}  $ is a sub-Gaussian random variable with variance $  \mathcal{O}(\varepsilon^2)  $.
The following proposition follows from a standard result in high-dimensional probability~\cite{roman2018high}.
\begin{proposition}
With high probability, the singular values of $  \delta T_{\mathcal{S}}  $ are bounded by $  \mathcal{O}(\sqrt{|\caS|} \, \varepsilon)  $.
\end{proposition}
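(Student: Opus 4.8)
The plan is to recognize the statement as the textbook bound on the operator norm of a random matrix with independent sub-Gaussian entries and to instantiate it with the parameters already supplied. First I would fix notation: since $\caS \subset \caP_0$, we have $I \notin \caS$, so $\widetilde T_\caS$, $T_\caS$ and $\delta T_\caS$ are $N \times N$ real matrices with $N := |\caS| + 1$, indexed by pairs $(P,Q)$ with $P,Q \in \caS \cup \{I\}$. Because the $(P,Q)$ entry of $\widetilde T_\caS$ is reconstructed from the measurement setting $P \otimes Q$ applied to its own independent batch of copies, the entries $(\delta T_\caS)_{P,Q}$ are independent across $(P,Q)$, have mean zero (the empirical mean is unbiased for $T_{P,Q}$), and are sub-Gaussian with $\lVert (\delta T_\caS)_{P,Q} \rVert_{\psi_2} = \caO(\varepsilon)$, exactly as granted in the preceding paragraph. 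I would then quote the standard estimate (e.g.\ Theorem~4.4.5 of \cite{roman2018high}): there is an absolute constant $C$ such that for every $t>0$, with probability at least $1 - 2e^{-t^2}$,
\begin{equation}
    \lVert \delta T_\caS \rVert \;\le\; C\,\varepsilon\left(\sqrt{N} + t\right).
\end{equation}
Choosing $t = \Theta(\sqrt{N})$ (or leaving $t$ free for a quantitative bound) gives $\lVert \delta T_\caS \rVert = \caO(\sqrt{|\caS|}\,\varepsilon)$ with high probability; since every singular value of $\delta T_\caS$ is at most its largest one, which equals the spectral norm $\lVert \delta T_\caS \rVert$, the proposition follows.

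For a self-contained version I would reproduce the short $\varepsilon$-net proof of that estimate. For fixed unit vectors $u,v \in \bbR^{N}$, the bilinear form $\<v, \delta T_\caS\, u\> = \sum_{P,Q} (\delta T_\caS)_{P,Q}\, v_P u_Q$ is a sum of independent mean-zero sub-Gaussian variables, hence itself sub-Gaussian with parameter $\caO\big(\varepsilon\,(\sum_{P,Q} |v_P u_Q|^2)^{1/2}\big) = \caO(\varepsilon)$, using $\sum_{P,Q} |v_P u_Q|^2 = \lVert u \rVert^2 \lVert v \rVert^2 = 1$; hence $\Pr[\,|\<v, \delta T_\caS\, u\>| > \tau\,] \le 2\exp(-c\tau^2/\varepsilon^2)$. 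Taking a $\tfrac14$-net $\mathcal{N}$ of the unit sphere $S^{N-1}$ with $|\mathcal{N}| \le 9^{N}$, a union bound over $\mathcal{N} \times \mathcal{N}$ at threshold $\tau = C\varepsilon\sqrt{N}$ controls $\max_{u,v \in \mathcal{N}} |\<v, \delta T_\caS\, u\>|$ except with probability at most $9^{2N} \cdot 2 e^{-cC^2 N}$, which is tiny once $C$ is a large enough absolute constant; the standard net-to-sphere comparison $\lVert \delta T_\caS \rVert \le 2 \max_{u,v \in \mathcal{N}} |\<v, \delta T_\caS\, u\>|$ then closes the argument.

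I do not expect a genuine obstacle: the random-matrix estimate is off-the-shelf and its hypotheses (independence, zero mean, sub-Gaussian tails of the entries) have already been established by the measurement model. The only point requiring care is that model itself---specifically, that the entries of $\delta T_\caS$ may be treated as \emph{independent}, which holds because distinct Pauli pairs $P \otimes Q$ correspond to distinct measurement settings estimated on separate data. If instead a single shadow-type data set were reused across many entries, the entries would be correlated and one would replace the plain net bound by a matrix-concentration argument tracking the covariance structure; this refinement is unnecessary for the protocol as stated, where the claimed $\caO(\sqrt{|\caS|}\,\varepsilon)$ scaling is precisely what the independent-entry bound delivers.
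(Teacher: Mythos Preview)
Your proposal is correct and matches the paper's approach: the paper simply asserts that the proposition ``follows from a standard result in high-dimensional probability'' and cites Vershynin~\cite{roman2018high}, which is exactly the operator-norm bound for matrices with independent mean-zero sub-Gaussian entries that you invoke (and even reproduce via the $\varepsilon$-net argument). Your discussion of the independence assumption is a useful clarification that the paper leaves implicit.
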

Note that the Frobenius 2-norm of the full CM is $  d^2  $. Consequently, the Frobenius 2-norm of $  T_{\mathcal{S}}  $ scales as $  \mathcal{O}(K^2)  $, which aligns with the overall scaling of its singular values. Thus, although a sample complexity of $  K = \widetilde{\mathcal{O}}(\chi^2)  $ suffices to preserve the rank of $  T_{\mathcal{S}}  $, increasing the number of samples beyond this threshold further suppresses the impact of measurement noise.

A notable feature of our method is its transparency to depolarizing noise. Specifically, the noisy state $  (1-\varepsilon)|\psi\rangle\!\langle\psi| + \varepsilon I/d^2  $ behaves nearly identically to the pure state $  |\psi\rangle\!\langle\psi|  $ under our protocol for tiny $\varepsilon$. This is because the depolarizing noise $ I/d^2 $ contributes exactly one additional rank to $  T_{\mathcal{S}}  $. As a result, with high probability, the rank of $  T_{\mathcal{S}}  $ increases by at most 1.
This property has both advantages and limitations. On the positive side, it confers robustness against depolarizing noise. On the negative side, the certified Schmidt number can overestimate the true value, as strong depolarizing noise can substantially reduce the Schmidt number of the original noiseless state \cite{terhal2000schmidt}.

\section{Numerical results}

To demonstrate the effectiveness of our protocol, we design the following numerical experiments. Consider a bipartite quantum system consisting of 12 qubits, with the full Hilbert space \(\mathcal{H}_{AB} = \mathcal{H}_A \otimes \mathcal{H}_B\) partitioned such that each subsystem \(A\) and \(B\) contains 6 qubits. The local dimension is thus \(d = 2^6 = 64\), and the total number of Pauli operators on the full system is \(d^2 - 1 = 4095\). 

As a first toy model, we take the target pure state to be the maximally entangled state with Schmidt rank \(\chi = 4\):
$|\phi\rangle = \sum_{i=0}^{3} |ii\rangle/2,$
where the Schmidt basis vectors are elements in the computational basis. The full CM associated with \(|\phi\rangle\) is a \(4096 \times 4096\) matrix (\(d^2 \times d^2\)) of rank 16. For illustration, we randomly select $32$ and $64$ random Pauli operators and construct the corresponding projected CM. We then compute and display its singular values.

In the rotated scenario, we apply independent random local unitaries \(U_A\) and \(U_B\) to the subsystems, producing the transformed state
$|\widetilde{\phi}\rangle = (U_A \otimes U_B) |\phi\rangle.$
Using the same fixed set of $32$ and $64$ Pauli operators, we construct the projected CM for \(|\widetilde{\phi}\rangle\) and compute its singular values. For comparison, we show the singular values of the full CM as well. To facilitate visual comparison across matrices of different sizes, all singular values are normalized by dividing by the dimension of the respective matrix.

The numerical results, illustrated in \fref{fig:trialstate}, demonstrate how our protocol effectively captures the correlation structure using a limited number of measurements. As expected, the full CM exhibits $\chi^2 = 16$ identical singular values. However, without the random unitary layer, even $K = 64$ Pauli measurements recover only 3 singular values, which merely confirms that $\chi \ge 2$. In contrast, by incorporating the random local unitary layer, just $K = 32$ measurements recover approximately 12 singular values, which is sufficient to certify that $\chi \ge 4$.

\begin{figure}
    \centering
    \includegraphics[width=\linewidth]{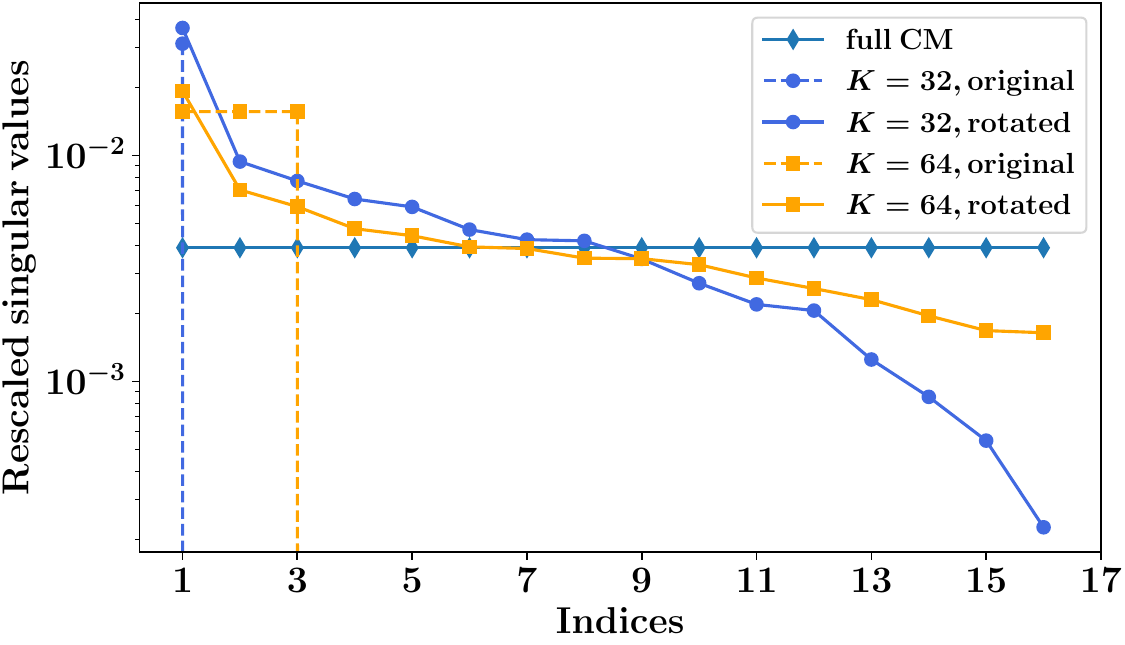}
    \caption{\label{fig:trialstate}Performance of random projection for the trial state $|\phi\> = \sum_{i=0}^3 |ii\>/2$.  The vertical axis represents the size of the top $16$ rescaled singular values. As revealed by the figure, under the original standard basis, only a few singular values of $T$ are retained. Under the randomly rotated bases, $K = 64$ number of samples suffices to recover the 16 singular values robustly.}
    \label{fig:trialstate}
\end{figure}

In our second numerical experiment (\fref{fig:fermihubbard}), we evaluate the Schmidt number of the ground state of a 12-site free fermion model and that of a 12-site strongly correlated model. For the free fermion model ($U=0$), the state exhibits a Schmidt rank of $\chi = 16$ (only keep singular values $> 10^{-7}$), and we perform the test using $K = 256$ and $512$ Pauli samples. At an interaction strength of $U = 6$, the Schmidt rank shifts to $\chi = 12$, for which we sample $K = 144$ and $288$ Pauli samples. Due to the high degree of delocalization in the Schmidt basis vectors for both states, the performance enhancement provided by local Haar random unitaries is not as significant as that observed in \fref{fig:trialstate}.

\begin{figure}
    \centering
    \includegraphics[width=\linewidth]{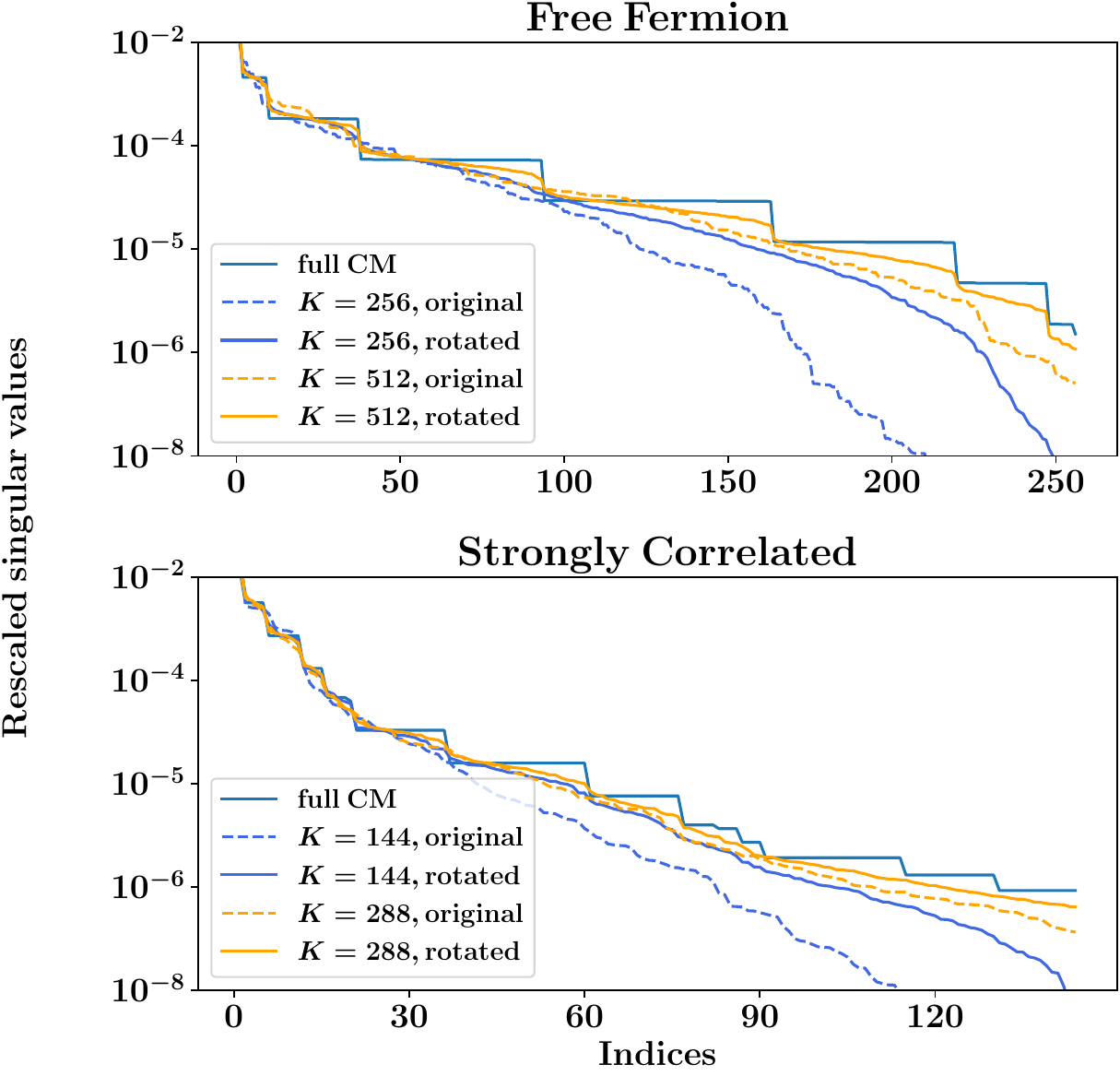}
    \caption{Performance of random projection for the ground state of free fermion model and strongly correlated model, respectively. The vertical axis represents the size of the top $256$ rescaled singular values. The trends of this model are consistent with that in \fref{fig:trialstate}. However, the effect of the random unitary layer is not as obvious.}
    \label{fig:fermihubbard}
\end{figure}

The code that generates the figures can be found in \cite{code2025}.

\section{Conclusion}

In conclusion, this work presents a practical, basis-independent protocol for certifying the Schmidt number of high-dimensional bipartite pure states using only random Pauli measurements and two local PRUs. By exploiting the rank-preserving nature of projected correlation matrices and random matrix theory insights, the protocol achieves sample complexity that is nearly independent of local dimension in typical cases — scaling as $  \widetilde{\mathcal{O}}(\chi^4)  $ with PRU rotations — while requiring $  \mathcal{O}(d\chi^2)  $ in the worst case. Numerical simulations on toy states and physically relevant ground states confirm the method’s effectiveness and demonstrate the substantial efficiency gains provided by local random unitaries. These results make high-dimensional entanglement certification experimentally viable.

We are interested in extending our methods to related areas, such as stabilizer testing \cite{hinsche2025single} and magic estimation \cite{oliviero2022measuring} (see also Refs. \cite{du2025certifying, ma2025haar}). Notably, the parameter $  \mu_0  $ bears a conceptual resemblance to the stabilizer Rényi entropy \cite{leone2022stabilizer,lami2023nonstabilizerness}, a computable measure of nonstabilizerness. While $  \mu_0  $ is defined over a set of vectors rather than a single state, potential deeper connections between these quantities deserve further investigation.

\bigskip

\section*{Acknowledgement}
We thank Huangjun Zhu for insightful comments. This work is supported by the Shanghai Municipal Science and Technology Major Project (Grant
No. 2019SHZDZX01-ZX04).

\bibliography{main}

\clearpage
\onecolumngrid

\appendix

\section{Background on random matrix theory}

In this appendix and the following, we use $\|v\|_2$ to represent the vector 2-norm defined as
\begin{equation}
    \|v\|_2 = \sqrt{\sum_{i=1}^n |v_i|^2}.
\end{equation}
Let $M$ be a matrix with dimension $N \times n$, where $N \ge n > 1$. Then $M^\dag M$ has dimension $n\times n$. Let $\{\lambda_i(M^\dag M)\}_{i=1}^n$ be the set of eigenvalues of $M^\dag M$, then the singular values of $M$ are
\begin{equation}
    \left\{s_i(M) = \sqrt{\lambda_i(M^\dag M)}\right\}_{i=1}^n,
\end{equation}
and the non-decreasing order is implied. Hence, $s_{\min}(A) = s_n(A) = \sqrt{\lambda_n(M^\dag M)}$.

Suppose $\caY$ is a finite set. Then we use $Y \sim \caY$ to denote that $Y$ is a random variable sampled uniformly from $\caY$. If we repeat the sample for $N$ times, then we use $Y_i$ to denote the $i$-th sample.

A random vector $X\sim \caX$ is called isotropic if $\bbE_{X\sim\caX}[XX^\dag] = I$.

\begin{theorem}\label{thm:singularv}[Theorem 5.41 in \cite{vershynin2010introduction}]
    Suppose $M$ is an $N\times n$ matrix whose rows are independent isotropic random vectors in $\bbC^{n}$, and $M_i$ is its $i$-th row vector. If $\|M_i\|_2 \le \sqrt{m}$ almost surely for all $i$, then for every $\varepsilon \ge 0$, with probability at least $1 - 2 n\exp(-c\varepsilon^2)$, we have
    \begin{equation}
         s_{\min}(M) \ge \sqrt{N} - \varepsilon\sqrt{m}.
    \end{equation}
    Here $c>0$ is an absolute constant.
\end{theorem}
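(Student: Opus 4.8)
The plan is to reduce the smallest-singular-value bound to an operator-norm deviation estimate for the Gram matrix $\tfrac1N M^\dagger M$ around the identity. Using $s_{\min}(M)^2 = \lambda_{\min}(M^\dagger M)$ together with Weyl's inequality, I have $\lambda_{\min}\!\left(\tfrac1N M^\dagger M\right) \ge \lambda_{\min}(I) - \|\tfrac1N M^\dagger M - I\| = 1 - \delta$ whenever $\|\tfrac1N M^\dagger M - I\| \le \delta \le 1$, so that $s_{\min}(M) \ge \sqrt{N}\sqrt{1-\delta} \ge \sqrt{N}(1-\delta)$. Substituting the target $\delta = \varepsilon\sqrt{m/N}$ gives exactly $s_{\min}(M) \ge \sqrt{N} - \varepsilon\sqrt{m}$, so the whole problem collapses to a high-probability bound $\|\tfrac1N M^\dagger M - I\| \le \varepsilon\sqrt{m/N}$.

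Next I would write this deviation as a sum of independent, mean-zero Hermitian matrices. Setting $X_i := M_i^\dagger M_i - I$, isotropy gives $\bbE X_i = \bbE[M_i^\dagger M_i] - I = 0$ and $M^\dagger M - NI = \sum_{i=1}^N X_i$. Each $M_i^\dagger M_i$ is rank one with nonzero eigenvalue $\|M_i\|_2^2 \le m$, so its spectrum lies in $[0,m]$ and the explicit eigenvalues of $X_i$ yield $\|X_i\| \le m$ almost surely. For the matrix variance I use $(M_i^\dagger M_i)^2 = \|M_i\|_2^2\, M_i^\dagger M_i \preceq m\, M_i^\dagger M_i$, hence $\bbE X_i^2 = \bbE[(M_i^\dagger M_i)^2] - I \preceq m\,\bbE[M_i^\dagger M_i] - I = (m-1)I$, giving the variance proxy $\|\sum_i \bbE X_i^2\| \le Nm$.

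With these two ingredients I would invoke the non-commutative (matrix) Bernstein inequality: for independent mean-zero Hermitian $n\times n$ matrices with $\|X_i\| \le K$ and $\|\sum_i \bbE X_i^2\| \le \sigma^2$, one has $\Pr\!\left[\|\sum_i X_i\| \ge s\right] \le 2n\exp\!\left(-c\min(s^2/\sigma^2,\,s/K)\right)$. Taking $K = m$, $\sigma^2 = Nm$, and $s = \varepsilon\sqrt{Nm}$ (so that $s/N = \varepsilon\sqrt{m/N} = \delta$), the two exponents become $s^2/\sigma^2 = \varepsilon^2$ and $s/K = \varepsilon\sqrt{N/m}$. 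In the only nontrivial regime $\varepsilon \le \sqrt{N/m}$ — outside of which $\sqrt{N}-\varepsilon\sqrt{m}\le 0$ and the claim is vacuous — the minimum is $\varepsilon^2$, producing the advertised failure probability $2n\exp(-c\varepsilon^2)$ and, on the complementary event, the bound $\|\tfrac1N M^\dagger M - I\| < \delta$ that feeds back into the first step.

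The main obstacle is the matrix Bernstein inequality itself: the \emph{linear} dimension factor $n$ in the probability (rather than the exponential $e^{O(n)}$ that a naive $\varepsilon$-net plus scalar-Bernstein union bound over the sphere would incur) signals that one must pass through a genuinely non-commutative concentration bound, not a covering argument. I would cite it as a black box, or, if a self-contained proof is wanted, establish it via the Ahlswede--Winter / Tropp moment-generating-function method (Golden--Thompson or Lieb concavity), which is the technically heaviest ingredient. Secondary care is needed to keep conjugate transposes straight in the complex Hermitian setting and to verify $\|X_i\| \le m$, which implicitly uses $m \ge 2$; this is harmless since isotropy forces $m \ge \bbE\|M_i\|_2^2 = \tr(I) = n \ge 2$.
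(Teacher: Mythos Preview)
The paper does not supply its own proof of this statement: it is quoted verbatim as Theorem~5.41 of Vershynin's notes, with a pointer to Tropp's user-friendly treatment for the complex/Hermitian case, and is used as a black box to derive Corollary~\ref{coro:isotropic}. So there is no in-paper argument to compare against.

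Your sketch is correct and is essentially the standard modern proof (the one in the Tropp reference the paper cites). The reduction $s_{\min}(M)\ge \sqrt{N}(1-\delta)$ from $\|\tfrac1N M^\dagger M - I\|\le\delta$, the decomposition $M^\dagger M - NI=\sum_i X_i$ with $X_i=M_i^\dagger M_i-I$, the variance bound $\bbE X_i^2\preceq (m-1)I$ via $(M_i^\dagger M_i)^2=\|M_i\|_2^2\,M_i^\dagger M_i$, and the invocation of matrix Bernstein with $s=\varepsilon\sqrt{Nm}$ are all exactly right, as is your observation that the nontrivial regime $\varepsilon\le\sqrt{N/m}$ forces the sub-Gaussian branch of Bernstein and hence the exponent $c\varepsilon^2$. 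Your remark that $m\ge n$ (hence $m\ge 2$) follows from isotropy together with the almost-sure norm bound is also correct and plugs the only small gap in the $\|X_i\|\le m$ estimate. The one place to be slightly more careful is the constant: matrix Bernstein in its usual form gives an exponent like $\tfrac12 s^2/(\sigma^2+Ks/3)$, which in the relevant regime is $\ge c\,\varepsilon^2$ for an absolute $c$ but not literally $\varepsilon^2$; since the theorem only claims ``some absolute $c>0$'' this is harmless.
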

The original theorem is about vectors on $\bbR^n$, but its generalization to complex random vectors is straightforward. Also see Theorem 1.6.2 in \cite{tropp2015introduction}.

A direct corollary that will be often used in our paper is as follows. 
\begin{corollary}\label{coro:isotropic}
    Suppose $V = \{v_1, \dots, v_N\}$ is a set of $N$ independent random vectors draw from $\caX$. If 
    \begin{enumerate}
        \item the random vector $X\sim \caX$ is isotropic on $\mathbb{C}^n$;
        \item $\|X\|_2 \le \sqrt{\mu}$ almost surely for $X \sim \caX$;
        \item $N = \caO( \mu \ln(n/\eta))$,
    \end{enumerate}
    then the vector set $V$ have rank $n$ with probability at least $1 - \eta$. 
\end{corollary}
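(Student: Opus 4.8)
The plan is to reduce the claim to Theorem~\ref{thm:singularv} applied to the $N \times n$ random matrix $M$ whose rows are $v_1, \dots, v_N$. The first step is the elementary linear-algebra observation that $\{v_1,\dots,v_N\}$ has rank $n$ (equivalently, spans $\mathbb{C}^n$) if and only if $M$ has full column rank $n$, which in turn is equivalent to $s_{\min}(M) > 0$. So it suffices to exhibit an event of probability at least $1-\eta$ on which $s_{\min}(M)$ is strictly positive.

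Next I would verify the hypotheses of Theorem~\ref{thm:singularv}. Condition~1 says the rows of $M$ are independent isotropic vectors in $\mathbb{C}^n$; condition~2 gives $\|M_i\|_2 \le \sqrt{\mu}$ almost surely, so we take $m = \mu$ in the theorem. We also need $N \ge n$ for the setup to be non-vacuous, and this is automatic: isotropy forces $\mu \ge \bbE\|X\|_2^2 = \operatorname{Tr}(\bbE[XX^\dag]) = \operatorname{Tr}(I_n) = n$, so $N = \caO(\mu \ln(n/\eta)) \ge n$ for the relevant implied constant (assuming $\eta$ is at most a fixed constant). The theorem then yields, for every $\varepsilon \ge 0$,
\[
    s_{\min}(M) \ge \sqrt{N} - \varepsilon\sqrt{\mu}
\]
with probability at least $1 - 2n\exp(-c\varepsilon^2)$, where $c>0$ is the absolute constant from Theorem~\ref{thm:singularv}.

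The final step is to choose $\varepsilon$ in the admissible window. To make the failure probability at most $\eta$ we need $2n\exp(-c\varepsilon^2) \le \eta$, i.e. $\varepsilon^2 \ge c^{-1}\ln(2n/\eta)$; to make the lower bound strictly positive we need $\varepsilon < \sqrt{N/\mu}$. Both can hold simultaneously precisely when $N > c^{-1}\mu\ln(2n/\eta)$, which is exactly what the hypothesis $N = \caO(\mu\ln(n/\eta))$ provides once the implied constant is taken larger than $c^{-1}$ (the factor $2$ inside the logarithm is absorbed into the constant). For such $N$, setting $\varepsilon = \sqrt{c^{-1}\ln(2n/\eta)}$ gives, on an event of probability at least $1-\eta$, $s_{\min}(M) \ge \sqrt{N} - \sqrt{c^{-1}\mu\ln(2n/\eta)} > 0$, hence $\operatorname{rank}(V) = n$.

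There is no substantive obstacle here — the hard analytic content lives entirely in Theorem~\ref{thm:singularv}. The only points requiring care are the constant bookkeeping (checking that the interval $\bigl[\,c^{-1/2}\sqrt{\ln(2n/\eta)},\ \sqrt{N/\mu}\,\bigr)$ for $\varepsilon$ is non-empty under hypothesis~3) and recording the auxiliary inequality $\mu \ge n$ so that $M$ is genuinely a tall matrix. If a quantitative margin is wanted downstream, one can instead keep the estimate in the form $s_{\min}(M) \ge \bigl(1 - \sqrt{c^{-1}\mu\ln(2n/\eta)/N}\,\bigr)\sqrt{N}$, but for the rank statement the strict positivity above is all that is needed.
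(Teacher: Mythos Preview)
Your proof is correct and follows essentially the same route as the paper's: form the matrix $M$ with the $v_i$ as rows (the paper writes them as columns, but this is immaterial), observe that $\operatorname{rank}(V)=n$ iff $s_{\min}(M)>0$, invoke Theorem~\ref{thm:singularv} with $m=\mu$, and choose $\varepsilon^2$ of order $\ln(n/\eta)$ so that both the failure probability and the positivity constraint are met. Your treatment is more careful than the paper's in tracking the constant and in recording the auxiliary inequality $\mu\ge n$ (which ensures $N\ge n$), but the argument is the same.
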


\begin{proof}
    Given $V = \{v_1, \dots, v_N\}$. Let
    \begin{equation}
        M = \left(v_1\ v_2 \ \cdots v_N\right)
    \end{equation}
    Then the vector set $V$ has rank $n$ iff rank$(M) = n$ iff $s_{\min}(M) > 0$. By virtue of \thref{thm:singularv}, when $N \ge \varepsilon^2 \mu$, the matrix $M$ has rank $n$ with probability at least $1 - 2n\exp(-c\varepsilon^2)$. In order to have $2n \exp(-c\varepsilon^2) \le \eta$, we need
    \begin{equation}
        \varepsilon^2 = C\ln(n/\eta)
    \end{equation}
    for some constant $C$.
\end{proof}

\section{Background on Pauli groups}

The single-bit Pauli operators contain three elements:
\begin{equation}
    X = \left(\begin{array}{cc}
        0 & 1 \\
        1 & 0
    \end{array}\right),\quad Y = \left(\begin{array}{cc}
        0 & -\rmi \\
        \rmi & 0
    \end{array}\right),\quad Z = \left(\begin{array}{cc}
        1 & 0 \\
        0 & -1
    \end{array}\right).
\end{equation}

Suppose $\caH$ is the Hilbert space of $N$ qubits. Then the $N$-qubit Pauli group is an operator basis for all Hermitian operators on $\caH$, which is composed by
\begin{equation}
    W_{p,q} = \bigotimes_{n=1}^{N}\rmi^{p_n q_n}X_n^{p_n}Z_n^{q_n}\quad p,q\in \bbF_2^N,
\end{equation}
where $X_n, Z_n$ is the Pauli $X, Z$ operator on the $n$-th site. 

For any two strings $p,q\in \bbF_2^N$, we define
\begin{equation}
    p\cdot q := \sum_{n=1}^N p_n q_n,\quad (p+q)_n \equiv p_n + q_n (\ \mathrm{ mod} \ 2)\quad \forall n.
\end{equation}
Let $|b\>$ be an element in the computational basis. Then we have
\begin{equation}
    W_{p,q}|b\> = \rmi^{p\cdot q}\rmi^{q\cdot b}|p+b\>.
\end{equation}
The commutation relation is then described by
\begin{equation}
    W_{p,q}W_{p',q'} = (-1)^{qp'}W_{p+p',q+q'} = (-1)^{p \cdot q' + p' \cdot q} W_{p',q'} W_{p,q}.
\end{equation}

\section{Proof of \lref{lem: Tdecomposition}}

Suppose the target state writes $\rho = |\Psi\>\<\Psi|$, where
\begin{equation}
    |\Psi\> = \sum_{i=0}^{\chi-1}\sqrt{\psi_i}|l_i\>\otimes|r_i\>\quad \psi_i > 0,
\end{equation}
and $\{|l_i\>\}_{i=0}^{d}, \{|r_i\}_{i=0}^{r-1}$ are vector bases on $\caH_A,\caH_B$ separately. Then $T$ has decomposition:
\begin{gather}
    T = dU_L \Lambda_\Psi U_R,\quad \Lambda_\Psi := \sum_{i,j=0}^{\chi-1}\sqrt{\psi_i \psi_j}|e_{ij}\>\<e_{ij}|,\\
    U_L := \frac{1}{\sqrt{d}}\sum_{P\in\caP}\sum_{i,j=0}^{d-1}\<l_j|P|l_i\>|e_{P}\>\<e_{ij}|,\quad U_R := \frac{1}{\sqrt{d}}\sum_{P\in\caP}\sum_{i,j=0}^{d-1}\<r_j|P|r_i\>|e_{ij}\>\<e_P|.
\end{gather}
Both $\{|e_{ij}\>\}_{i,j=0}^{d-1}$ and $\{|e_P\>\}_{P\in\caP}$ are vector basis for a Hilbert space of dimension $d^2$. 

Observe that both $U_L,U_R$ are unitary operators:
    \begin{equation}
    \begin{aligned}
        U_L U_L^\dag &= \frac{1}{d}\sum_{P\in\caP}\sum_{i,j=0}^{d-1}\sum_{a,b=0}^{d-1}\<l_i|P|l_j\>\<l_a|P|l_b\>^*|e_{ij}\>\<e_{ab}| \\
        &= \frac{1}{d}\sum_{P\in\caP}\sum_{i,j=0}^{d-1}\sum_{a,b=0}^{d-1}\<l_i|P|l_j\>\<l_b|P|l_a\>|e_{ij}\>\<e_{ab}| \\
        &= \sum_{i,j=0}^{d-1}\sum_{a,b=0}^{d-1}\<l_i l_b|\mathrm{SWAP}|l_j l_a\> |e_{ij}\>\<e_{ab}| \\
        &= \sum_{i,j=0}^{d-1}\sum_{a,b=0}^{d-1}\delta_{i,a}\delta_{b,j} |e_{ij}\>\<e_{ab}| = \sum_{i,j=0}^{d-1}|e_{ij}\>\<e_{ij}|.
    \end{aligned}
    \end{equation}

        \begin{equation}
    \begin{aligned}
        U_L^\dag U_L &= \frac{1}{d}\sum_{i,j=0}^{d-1}\sum_{P\in\caP}\sum_{Q\in\caP}\<l_i|P|l_j\>^*\<l_i|Q|l_j\> |e_P\>\<e_Q| \\
        &= \frac{1}{d}\sum_{i,j=0}^{d-1}\sum_{P\in\caP}\sum_{Q\in\caP}\<l_j|P|l_i\>\<l_i|Q|l_j\> |e_P\>\<e_Q|\\
        &= \frac{1}{d}\sum_{P\in\caP}\sum_{Q\in\caP}\Tr(PQ)|e_P\>\<e_Q| \\
        &= \sum_{P\in\caP}\sum_{Q\in\caP}\delta_{P,Q}|e_P\>\<e_Q| = \sum_{P\in\caP}|e_P\>\<e_P|.
    \end{aligned}
    \end{equation}
    The proof for $U_R$ is identical. Hence rank$(T)=$ rank$(\Lambda_\Psi) = \chi^2$. This completes the proof of \lref{lem: Tdecomposition}.

\section{Proof of the main theorem}
\label{app:main_theorem}

Recall that 
\begin{gather}
    |v_P\> := \frac{1}{\sqrt{d}}\sum_{i,j=0}^{\chi-1}\<l_i|P|l_j\> |e_{ij}\>,\quad \sum_{P\in\caP}|v_P\>\<v_P| = \sum_{i,j=0}^{\chi-1}|e_{ij}\>\<e_{ij}| := \bbI_\Psi,\\
    |e_I\> := \frac{1}{\sqrt{\chi}}\sum_{i=0}^{\chi-1}|e_{ii}\>,\quad |\widetilde{v}_P\> := (\bbI_\Psi - |e_I\>\<e_I|)|v_P\>.
\end{gather}

\subsection{Proof of \thref{theorem:main}}
\label{app:theoremmain}
\begin{proof}
    Introduce a random vector $X$ whose distribution is the unform distribution on
    \begin{equation}
        \caX := \left\{\sqrt{d^2-1}|\widetilde{v}_P\> : P\in \caP_0.\right\}.
    \end{equation}
    Note that
    \begin{equation}
        \bbE_{X\sim\caX}[XX^\dag] = \sum_{P\in \caP_0}|\widetilde{v}_P\>\<\widetilde{v}_P| = \bbI_\Psi - |e_I\>\<e_I|,
    \end{equation}
    which means 
    $X$ is isotropic on $\bbC^{\chi^2-1}$.
    
    For all $P\in\caP_0$, we have
    \begin{equation}
        (d^2-1)\<\widetilde{v}_P|\widetilde{v}_P\> \le d^2\<v_P|v_P\> = d\sum_{i,j=0}^{\chi-1}|\<l_i|P|l_j\>|^2 \le \mu_0.
    \end{equation}
    Hence, $\|X\|_2 \le \sqrt{\mu_0}$ almost surely. By virtue of \coref{coro:isotropic}, we can conclude that given $K = \caO(\mu_0 \log(\chi/\eta))$, the rank of $\{X_i\}_{i=1}^K$ is $\chi^2-1$. 
    
    Each $X_i$ has a one-to-one correspondence with a Pauli operator denoted as $P_i$. The proof is then completed by observing that
        \begin{equation}
            \left\{X_i\right\}_{i=1}^K = \left\{\sqrt{d^2-1}|\widetilde{v}_{P_i}\>\right\}_{i=1}^K = \left\{\sqrt{d^2-1}|\widetilde{v}_{P}\>\right\}_{P\in\caS},
        \end{equation}
        where $\caS = \{P_i\}_{i=1}^K$ is a set of $K$ Pauli operators draw uniformly from $\caP_0$.
    
\end{proof}

A concrete example that matches the bound in \thref{theorem:main} is in \aref{app:computational}.

\subsection{Proof of \coref{coro:main}}
\label{app:corollarymain}
It has been proved that that \cite{grewal2023low}:
\begin{lemma} \label{lem:randomweyl}
    Suppose $|v\>$ is a Haar random state on a Hilbert space $\caH$ of dimension $d$. Then
    \begin{equation}
        \Pr\left\{\max_{P\in \caP_0}|\<v|P|v\>|^2 \ge \frac{\poly \log(d/\eta)}{d}\right\} \le \eta.
    \end{equation}
\end{lemma}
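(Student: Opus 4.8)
\noindent\emph{Proof idea.}~ The plan is to reduce the statement to a single-operator concentration bound and then take a union bound over the $d^2-1$ elements of $\caP_0$. Fix a nonidentity Pauli $P$. Because $P$ is Hermitian, $\langle v|P|v\rangle\in\bbR$; because $P$ is moreover unitary with $P^2=I$ and $\Tr P=0$, its $\pm1$ eigenspaces each have dimension $d/2$, so $\langle v|P|v\rangle=2\langle v|\Pi_+|v\rangle-1$ for a rank-$(d/2)$ projector $\Pi_+$. For Haar-random $|v\rangle$, after rotating the basis so that $\Pi_+$ is diagonal, the overlap $\langle v|\Pi_+|v\rangle$ is a ratio of sums of i.i.d.\ squared complex-Gaussian moduli, hence $\mathrm{Beta}(d/2,d/2)$-distributed; its law is symmetric about $1/2$ with variance $\Theta(1/d)$, so $\langle v|P|v\rangle$ has mean and median exactly $0$. (Equivalently, choose a unitary $U$ with $UPU^\dagger=-P$---possible since $-P$ has the same spectrum as $P$---and invoke Haar invariance to get $\langle v|P|v\rangle\stackrel{d}{=}-\langle v|P|v\rangle$.)

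Next I would establish the sub-Gaussian tail bound $\Pr[\,|\langle v|P|v\rangle|\ge t\,]\le 2\exp(-c\,d\,t^2)$ for an absolute constant $c>0$. The cleanest route is L\'evy's concentration-of-measure inequality on the real sphere $S^{2d-1}$: the map $|v\rangle\mapsto\langle v|P|v\rangle$ is $2$-Lipschitz in the Euclidean metric, because $\langle v|P|v\rangle-\langle w|P|w\rangle=\langle v|P|(v-w)\rangle+\langle (v-w)|P|w\rangle$ and $\|P\|=1$, and its median equals $0$; L\'evy's lemma then gives exactly this Gaussian tail. One may instead use the known sub-Gaussian tail of the symmetric Beta law applied to $2\langle v|\Pi_+|v\rangle-1$, or compute the even moments by Weingarten calculus: with $W_\sigma$ the permutation operator on $(\bbC^d)^{\otimes 2k}$ and $c(\sigma)$ its number of cycles, $\bbE\,\langle v|P|v\rangle^{2k}=\big((2k)!\,\binom{d+2k-1}{2k}\big)^{-1}\sum_{\sigma\in S_{2k}}\Tr(P^{\otimes 2k}W_\sigma)$, and $\Tr(P^{\otimes 2k}W_\sigma)=d^{\,c(\sigma)}$ when every cycle of $\sigma$ has even length and $0$ otherwise (since $\Tr P^{\mathrm{odd}}=0$), so the sum is dominated by the $(2k-1)!!$ fixed-point-free pairings and $\bbE\,\langle v|P|v\rangle^{2k}\lesssim(2k/d)^k$; Markov's inequality applied to $\langle v|P|v\rangle^{2k}$ then yields the tail bound.

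Finally, a union bound over $\caP_0$ gives $\Pr[\max_{P\in\caP_0}|\langle v|P|v\rangle|\ge t]\le 2(d^2-1)\exp(-c\,d\,t^2)$, which is at most $\eta$ once $t^2=C\log(d/\eta)/d$ for a suitable absolute constant $C$ absorbing $\log(2d^2/\eta)=\caO(\log(d/\eta))$. Squaring gives $\max_{P\in\caP_0}|\langle v|P|v\rangle|^2\le C\log(d/\eta)/d\le\poly\log(d/\eta)/d$, which is the claim.

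The main obstacle is purely bookkeeping: verifying the Lipschitz constant, confirming that the relevant center (the median) is exactly $0$ so L\'evy's lemma applies with no shift, and---if the moment route is used---checking that the approximation $\binom{d+2k-1}{2k}\approx d^{2k}/(2k)!$ and the ``pairings dominate'' estimate survive in the regime $k=\caO(\log(d/\eta))$ forced by the union bound; that is precisely the spot where a clean $\log$ bound could soften to a $\poly\log$ for extremely small $\eta$. Since the statement only requires $\poly\log(d/\eta)$, none of this is a genuine difficulty.
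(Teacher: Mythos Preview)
Your proposal is correct and follows essentially the same approach the paper invokes: the paper does not reprove the lemma but cites Corollary~22 of \cite{grewal2023low} and remarks (in the lead-in to \lref{lem:levy}) that the underlying argument is L\'evy's concentration inequality, which is exactly your route of a $2$-Lipschitz bound on $|v\rangle\mapsto\langle v|P|v\rangle$, zero median by the $P\leftrightarrow -P$ symmetry, and a union bound over the $d^2-1$ elements of $\caP_0$. Your final substitution $t^2=C\log(d/\eta)/d$ matches the paper's $\varepsilon=\poly\log(d/\eta)/\sqrt{d}$; the Beta-distribution and Weingarten-moment alternatives you sketch are correct but unnecessary once L\'evy's lemma is in hand.
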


\begin{proof}
    Let $d = 2^n, \varepsilon = \poly \log(d/\eta)/\sqrt{d}$. Then $\eta = \caO(d^2 \exp(-cd\varepsilon^2))$, and the lemma is identical to Corollary 22 in \cite{grewal2023low}.
\end{proof}

Suppose $U$ is a Haar random matrix. Let
\begin{equation}
    \mu_U := d\max_{P\in \caP_0}\sum_{i,j=0}^{\chi-1}|\<l_i|U^\dag PU|l_j\>|^2.
\end{equation}
Then we have the following lemma as a generalization of \lref{lem:randomweyl}.
\begin{lemma}\label{lem:muU}
    Suppose $U$ is a random unitary operator. Then 
    \begin{equation}
        \Pr\{\mu_U \ge \chi^2 \poly\log(d/\eta) \} \le \frac{\chi^2 + \chi}{2}\eta.
    \end{equation}
\end{lemma}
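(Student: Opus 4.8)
The plan is to reduce the bound on $\mu_U$ to repeated applications of \lref{lem:randomweyl}, using the polarization identity to express each off-diagonal matrix element $\langle l_i|U^\dag P U|l_j\rangle$ in terms of diagonal ones. Writing $|L_i\rangle := U|l_i\rangle$, so that $\langle l_i|U^\dag P U|l_j\rangle = \langle L_i|P|L_j\rangle$, I would first observe that since $\{|l_i\rangle\}$ is orthonormal, for each pair $i<j$ the vectors $|\phi_{ij}\rangle := (|L_i\rangle+|L_j\rangle)/\sqrt2$ and $|\psi_{ij}\rangle := (|L_i\rangle+\rmi|L_j\rangle)/\sqrt2$ are unit vectors; moreover each of $|L_i\rangle$, $|\phi_{ij}\rangle$, $|\psi_{ij}\rangle$ is $U$ applied to a \emph{fixed} unit vector, hence each is marginally a Haar random pure state on $\caH_A$ (which has dimension $d$). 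The polarization identity for the Hermitian operator $P$ then gives, for $i<j$,
\[
|\langle L_i|P|L_j\rangle| \le |\langle\phi_{ij}|P|\phi_{ij}\rangle| + |\langle\psi_{ij}|P|\psi_{ij}\rangle| + |\langle L_i|P|L_i\rangle| + |\langle L_j|P|L_j\rangle|,
\]
so it suffices to control $|\langle v|P|v\rangle|^2$ simultaneously over all $P\in\caP_0$ for the finite family $\{|L_i\rangle\}_i \cup \{|\phi_{ij}\rangle,|\psi_{ij}\rangle\}_{i<j}$ of Haar random states.

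I would then apply \lref{lem:randomweyl} once to each member of this family. For each $i\in\{0,\dots,\chi-1\}$, the event $A_i := \{\max_{P\in\caP_0}|\langle L_i|P|L_i\rangle|^2 \ge \poly\log(d/\eta)/d\}$ has probability at most $\eta$. For each pair $i<j$, applying the lemma with failure parameter $\eta/2$, the events $B_{ij} := \{\max_{P\in\caP_0}|\langle\phi_{ij}|P|\phi_{ij}\rangle|^2 \ge \poly\log(2d/\eta)/d\}$ and its analogue $C_{ij}$ for $|\psi_{ij}\rangle$ each have probability at most $\eta/2$. A union bound over the $\chi$ events $\{A_i\}$ and the $\binom{\chi}{2}$ pairs $\{B_{ij},C_{ij}\}$ gives total failure probability at most $\chi\eta + \binom{\chi}{2}\eta = \tfrac{\chi^2+\chi}{2}\eta$.

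On the complementary event, every diagonal term satisfies $|\langle L_i|P|L_i\rangle| \le \sqrt{\poly\log(d/\eta)/d}$ and every auxiliary term satisfies $|\langle\phi_{ij}|P|\phi_{ij}\rangle|,|\langle\psi_{ij}|P|\psi_{ij}\rangle| \le \sqrt{\poly\log(2d/\eta)/d}$, which is $O(\sqrt{\poly\log(d/\eta)/d})$ after absorbing the constant shift of the logarithm's argument into the polylog. The polarization bound then yields $|\langle l_i|U^\dag P U|l_j\rangle|^2 = O(\poly\log(d/\eta)/d)$ for every $i,j$ and every $P\in\caP_0$, and summing the $\chi^2$ entries gives $\mu_U = d\max_{P\in\caP_0}\sum_{i,j=0}^{\chi-1}|\langle l_i|U^\dag P U|l_j\rangle|^2 = O(\chi^2\poly\log(d/\eta))$, that is, $\chi^2\poly\log(d/\eta)$ in the paper's notation, which is the claim. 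The one point requiring care is the accounting that lands the union-bound coefficient at exactly $\tfrac{\chi^2+\chi}{2}$: each off-diagonal entry needs two auxiliary Haar vectors, so these must be controlled at level $\eta/2$ in order that each of the $\binom{\chi}{2}$ off-diagonal pairs contributes $\eta$, matching the $\chi$ diagonal contributions. (Alternatively one could compute $\bbE\sum_{i,j}|\langle L_i|P|L_j\rangle|^2 = \Theta(\chi^2/d)$ via Weingarten calculus and invoke a concentration-of-measure argument on the unitary group together with a union bound over $\caP_0$, but that route meshes less cleanly with the already-established \lref{lem:randomweyl}.)
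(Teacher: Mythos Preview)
Your proof is correct but takes a genuinely different route from the paper. The paper bounds the off-diagonal entries $|\langle L_i|P|L_j\rangle|$ \emph{directly} via L\'evy's lemma (\lref{lem:levy}): it views the $\chi$ orthonormal vectors as being sampled sequentially, so that conditioned on $L_1,\dots,L_{j-1}$ the vector $L_j$ is Haar on the $(d-j+1)$-dimensional complement, and then applies concentration to the $1$-Lipschitz function $w\mapsto|\langle L_i|P|w\rangle|$ on that smaller sphere. You instead reduce the off-diagonal case to the diagonal one by polarization, introducing the auxiliary unit vectors $|\phi_{ij}\rangle,|\psi_{ij}\rangle$ and exploiting that each is marginally Haar because it equals $U$ applied to a fixed unit vector; this lets you invoke only the already-packaged \lref{lem:randomweyl}. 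Your route is more self-contained (no need to reopen L\'evy's lemma or track conditional distributions on nested spheres), while the paper's route avoids the auxiliary vectors and the mild constant loss from the polarization inequality. Both arrive at exactly $(\chi^2+\chi)/2$ bad events: for the paper it is $\chi$ diagonal plus $\binom{\chi}{2}$ off-diagonal terms each charged $\eta$; for you it is $\chi$ diagonal terms at level $\eta$ plus $\binom{\chi}{2}$ pairs each charged $2\cdot(\eta/2)$.
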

Hence, Corollary \ref{coro:main} is proved by combining \thref{theorem:main} and \lref{lem:muU}.

\subsection{Proof of \lref{lem:muU}}

The proof of the original \lref{lem:randomweyl} is based on the Levy's lemma \cite{watrous2018theory}. 

\begin{lemma}\label{lem:levy}
    Suppose $f : \bbS^N\to \bbR$ is an $L$-Lipschitz function, and $v$ is a Haar random vector on a $N$-dimensional Hilbert space. Then
    \begin{equation}
        \Pr\left\{\left|f(v) - \bbE[f]\right| \ge \varepsilon\right\} \le 2\exp\left(-\frac{N\varepsilon^2}{9\bbI_{\Psi}^3 L^2}\right).
    \end{equation}
\end{lemma}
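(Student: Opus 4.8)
The plan is to establish this as the classical concentration-of-measure estimate on the sphere (Levy's lemma), whose only nontrivial ingredient is the spherical isoperimetric inequality; the constant in the denominator is the standard non-optimized value $9\pi^3$. First I would fix the geometry. A Haar-random vector on an $N$-dimensional complex Hilbert space is, under the identification $\mathbb{C}^N \cong \mathbb{R}^{2N}$, distributed uniformly on the real unit sphere $\mathbb{S}^{2N-1}$, so the effective concentration dimension is $2N-1 = \Theta(N)$, and throughout I work with the uniform probability measure $\sigma$ on that sphere. I would then reduce the two-sided bound to one-sided tails about a median: letting $M_f$ be a median of $f$, it suffices by a union bound to control $\Pr\{f \ge M_f + \varepsilon\}$ and $\Pr\{f \le M_f - \varepsilon\}$ separately.

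The core step invokes the spherical isoperimetric inequality, which I would cite (e.g. from Ledoux or Milman--Schechtman) rather than reprove, since its proof via symmetrization or the heat semigroup lies well outside the surrounding argument. Applied to the sublevel set $A := \{x : f(x) \le M_f\}$, which has $\sigma(A) \ge \tfrac12$, it gives an explicit bound $\sigma(A_\delta^c) \le C\exp(-c(2N-1)\delta^2)$ on the complement of the geodesic $\delta$-neighborhood $A_\delta$. Since the chordal distance never exceeds the geodesic distance and $f$ is $L$-Lipschitz in the ambient norm, every $x \in A_{\varepsilon/L}$ obeys $f(x) \le M_f + \varepsilon$; hence $\{f > M_f + \varepsilon\} \subseteq A_{\varepsilon/L}^c$, and the upper tail is bounded by $C\exp(-c(2N-1)\varepsilon^2/L^2)$. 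Repeating the argument with $\{f \ge M_f\}$ handles the lower tail.

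The final step upgrades the median to the mean. Integrating the two-sided tail bound just obtained yields $|M_f - \mathbb{E}[f]| = \mathcal{O}(L/\sqrt{N})$, which is negligible against $\varepsilon$ in the regime of interest and is absorbed by halving $\varepsilon$ and enlarging the constant. Assembling the factor $2$ from the two-sided union bound, the median-to-mean correction, and the cap-measure and normalization estimates used inside the isoperimetric step then produces the stated exponent $N\varepsilon^2/(9\pi^3 L^2)$.

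I expect the main obstacle to be purely one of constant bookkeeping rather than of ideas: the sharp exponent delivered by isoperimetry carries the dimensional constant $\tfrac12$, whereas the stated bound retains only $1/(9\pi^3)$, so the proof must transparently carry the deliberately non-optimized constants that accumulate through the cap-measure estimate, the metric normalization on $\mathbb{S}^{2N-1}$, and the median-to-mean passage. If one is content to cite the result outright, this entire chain reduces to quoting the isoperimetric inequality together with a standard reference such as \cite{watrous2018theory}.
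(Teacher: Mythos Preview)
The paper does not actually prove this lemma: it states Levy's lemma as a known concentration-of-measure result and cites Watrous's textbook \cite{watrous2018theory} for it, then immediately applies it to the Lipschitz functions $f_1(v)=|\langle v|P|v\rangle|$ and $f_2(w)=|\langle v|P|w\rangle|$. Your sketch, by contrast, outlines the standard proof from scratch---spherical isoperimetric inequality applied to a sublevel set of measure at least $\tfrac12$, the Lipschitz hypothesis to convert a geodesic neighborhood into a tail event about the median, and the median-to-mean correction absorbed into the constant. That outline is correct and is exactly how the cited references establish the inequality; you even anticipate the paper's actual treatment in your last sentence. The symbol $\bbI_\Psi^3$ in the exponent is, as you inferred, a typographical slip for $\pi^3$.
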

Note that $f_1(v) = |\<v|P|v\>| $ is 2-Lipschitz, and $  f_2(w) = |\<v|P|w\>| $ is 1-Lipschitz.

The \lref{lem:muU} can be proved by constructing proper Lipschitz functions.
\begin{proof}
Suppose $\chi = 2$. Then
\begin{equation}
    \frac{\mu_U}{d} = \max_{P\in \caP_0}\left(|\<v_1|P|v_1\>|^2 + 2|\<v_1|P|v_2\>|^2 + |\<v_2|P|v_2\>|^2\right),
\end{equation}
where $|v_1\>,|v_2\>$ are the first two column vectors of a Haar random unitary. Sample two random states $|v_1\>,|v_2\>$ that are orthogonal to each other is equivalent to sample $v_1$ form $\caH_d$, then sample $v_2$ from $\caH_{d-1}$.  First, we randomly sample $v_1\in\caH_d$. By virtue of \lref{lem:randomweyl}, with probability $1 - \eta$, we have
\begin{equation}
    \max_{P\in \caP_0}|\<v_1|P|v_1\>|^2 \le \frac{\poly\log(d/\eta)}{d}.
\end{equation}
Then randomly sample a vector $|v_2\>$ from the complement space of $|v_1\>$, which is a Hilbert space of dimension $d-1$. In conjunction with \lref{lem:levy}, with probability $1 - \eta$, we also have
\begin{gather}
    \max_{P\in \caP_0}|\<v_2|P|v_2\>|^2 \le \frac{\poly\log((d-1)/\eta)}{d-1},\\
    \max_{P\in \caP_0}|\<v_1|P|v_2\>|^2 \le \frac{\poly\log((d-1)/\eta)}{d-1},
\end{gather}
respectively. Altogether, we have with probability at least $1 - 3\eta$, we have
\begin{equation}
    \frac{\mu_U}{d} \le 4 \times \frac{\poly\log(d/\eta)}{d}.
\end{equation}
We can generalize the argument to $\chi$ random vectors. Eventually, with probability at least
\begin{equation}
    1 - \frac{\chi^2 + \chi}{2}\eta,
\end{equation}
we have
\begin{equation}
    \mu_U \le \chi^2 \poly\log(d/\eta).
\end{equation}

\end{proof}

\section{Computational basis scenario}
\label{app:computational}

In this appendix, we prove that the sample complexity for the states whose Schmidt basis vectors are computational is $\caO(d\chi)$, whic matches with \thref{theorem:main} in the case where $\mu_0 = \caO(d\chi)$. 

Recall that
\begin{equation}
    |v_P\> = \sum_{i,j=0}^{\chi-1}\<l_i|P|l_j\>|e_{ij}\>,\quad |\widetilde{v}_P\> = \left(\bbI_{\Psi} - |e_I\>\<e_I|\right)|v_P\>,\quad \bbI_\Psi := \sum_{i,j=0}^{\chi-1}|e_{ij}\>\<e_{ij}|,
\end{equation}
where $\{|l_i\>,|l_j\>\}$ are elements in the computational basis now. Let $\caA := \{l_i\}_{i=0}^{\chi-1}$ and use $a,b,\ldots$ to denote elements in $\caA$ henceforth. Introduce
\begin{equation}
    \overline{\caA + x} := \{a + x : a \in \caA\},\quad \overline{\caA + \caA} := \{ a + b : a,b\in\caA\}.  
\end{equation}
We also simplify $\bbI_{\Psi}$ as
\begin{equation}
    \bbI_\Psi = \sum_{a,b\in\caA}|e_{ab}\>\<e_{ab}|.
\end{equation}
If $P = W_{x,z}$, then we can rewrite $|v_P\>$ as
\begin{align}
    |v_{x,z}\> &= \sum_{a,b\in\caA}\<a|W_{x,z}|b\>|e_{ab}\> = \rmi^{x\cdot z}\sum_{a,b\in\caA}\rmi^{z\cdot b}\delta_{a,b+x}|e_{ab}\>\nonumber\\
    &= \delta_{x\in\overline{\caA + \caA}}\sum_{a\in \caA \cap \overline{\caA + x}}\rmi^{z\cdot(a+x)}|e_{a,a+x}\>.
\end{align}
Observe that $\<v_{x,z}|v_{x',z'}\> \neq 0$ only if $x = x'$. Let $\caS_x := \{W_{x,z} : W_{x,z}\in\caS\}$. Then we have
\begin{equation}
    \mathrm{rank}\left(\{|v_{x,z}\> : W_{x,z}\in\caS\}\right) = \sum_{x}\mathrm{rank}\left(\{|v_{x,z}\> : W_{x,z}\in\caS_x\}\right).
\end{equation}

Note that
\begin{equation}
    |e_I\> = \frac{1}{\sqrt{\chi}}\sum_{i=0}^{\chi-1}|ii\> \propto |v_{0,0}\>.
\end{equation}
Hence, our target can be decomposed into
\begin{align}
    \mathrm{rank}\left(\{|\widetilde{v}_{x,z}\> : W_{x,z}\in\caS\}\right) &= \sum_{x\in \overline{\caA + \caA}/\{0\}}\mathrm{rank}\left(\{|v_{x,z}\> : W_{x,z}\in\caS_x\}\right) \nonumber \\
    &\quad + \mathrm{rank}\left(\{(\bbI_{\Psi} - |e_I\>\<e_I|)|v_{0,z}\> : W_{0,z}\in\caS_0\}\right).
\end{align}

Now, we will look into how many Pauli samples are needed to have
\begin{equation}
    \mathrm{rank}\left(\{|\widetilde{v}_{x,z}\> : W_{x,z}\in\caS\}\right) = \chi^2 - 1.
\end{equation}

 Choose one $x \in \overline{\caA + \caA}$. Denote the matrix whose column vectors are $\{|v_{x,z}\>\}_{z=0}^{d-1}$ as $R^{(x)}$. Hence, the non-zero part of $R^{(x)}$ writes
    \begin{equation}
        R^{(x)}_{a,z} := \rmi^{z\cdot (a+2x)}\quad a\in \caA\cap\overline{\caA + x},\ z \in \{0,\ldots,d-1\}.
    \end{equation}
    The rank of $R$ is exactly $|\caA \cap \overline{\caA + x}|$ because the row vectors are linearly independent:
    \begin{equation}
        \sum_{z=0}^{d-1}(-1)^{z\cdot (a_1 + 2x - a_2 - 2x)} = d\delta_{a_1,a_2}.
    \end{equation}
    The conclusion is equivalent to
    \begin{equation}
        \sum_{z=0}^{d-1}|v_{x,z}\>\<v_{x,z}| = d \sum_{a\in \caA \cap \overline{\caA + x}}|e_{a,a+x}\>\<e_{a,a+x}|.
    \end{equation}
    By virtue of \coref{coro:isotropic}, we have the following lemma.

    \begin{lemma}
        Suppose $x\in \overline{\caA + \caA}$, and $\caZ$ is $N$ random bit strings sampled uniformly from $\{0,1,\ldots,d-1\}$. If $N = \caO(\chi\log(\chi/\eta))$, then with probability at least $1 - \eta$, we have
        \begin{equation}
            \mathrm{rank}\left(\{|v_{x,z}\>\}_{z\in \caZ}\right) = |\caA \cap \overline{\caA + x}|.
        \end{equation}
    \end{lemma}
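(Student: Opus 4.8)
The plan is to read this lemma straight off \coref{coro:isotropic}, in exactly the manner \thref{theorem:main} was proved, once we restrict attention to the subspace actually spanned by the relevant vectors. Fix $x\in\overline{\caA+\caA}$, write $m_x := |\caA\cap\overline{\caA+x}|\ge 1$, and let $\caV_x := \spa\{|e_{a,a+x}\> : a\in\caA\cap\overline{\caA+x}\}$, a subspace of dimension $m_x$. Every $|v_{x,z}\>$ lies in $\caV_x$, so $\mathrm{rank}(\{|v_{x,z}\>\}_{z\in\caZ})\le m_x$, and it suffices to show equality holds with high probability. The structural input is the tight-frame identity derived just above the lemma, $\sum_{z=0}^{d-1}|v_{x,z}\>\<v_{x,z}| = d\sum_{a\in\caA\cap\overline{\caA+x}}|e_{a,a+x}\>\<e_{a,a+x}|$; equivalently, $\frac1d\sum_z|v_{x,z}\>\<v_{x,z}|$ is the orthogonal projector $\Pi_{\caV_x}$ onto $\caV_x$.

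Next I would introduce the random vector $X := |v_{x,Z}\>$ with $Z$ uniform on $\{0,1,\dots,d-1\}$. Viewed inside $\caV_x\cong\bbC^{m_x}$, the frame identity gives $\bbE_Z[XX^\dag] = \frac1d\sum_z|v_{x,z}\>\<v_{x,z}| = \Pi_{\caV_x}$, which restricts to the identity on $\caV_x$, so $X$ is isotropic on $\bbC^{m_x}$. Since the $m_x$ coordinates of each $|v_{x,z}\>$ all have unit modulus, $\|X\|_2^2 = \<v_{x,z}|v_{x,z}\> = m_x$, hence $\|X\|_2\le\sqrt{m_x}$. Applying \coref{coro:isotropic} with $n=\mu=m_x$: the $N$ strings $z\in\caZ$ sampled independently and uniformly from $\{0,\dots,d-1\}$ yield $N$ i.i.d.\ copies $X_1,\dots,X_N$ of $X$ with $\{X_i\}_{i=1}^N = \{|v_{x,z}\>\}_{z\in\caZ}$, so for $N = \caO(m_x\log(m_x/\eta))$ this set has rank $m_x = |\caA\cap\overline{\caA+x}|$ with probability at least $1-\eta$. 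Because $m_x\le\chi$, the stated budget $N = \caO(\chi\log(\chi/\eta))$ is more than enough (the $\caO(\cdot)$ absorbs the absolute constant from the $\varepsilon^2=C\ln(n/\eta)$ choice in \coref{coro:isotropic}).

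I do not expect a genuine obstacle here; the only points needing care are bookkeeping ones. First, one must invoke \coref{coro:isotropic} with the correct ambient dimension $m_x$ — not $\chi^2$ or $d^2$ — which is legitimate because restricting all vectors to $\caV_x$ leaves their rank unchanged and the frame identity certifies isotropy there. Second, one should observe that $|v_{x,z}\>$ is nonzero for every $z$ precisely because the hypothesis $x\in\overline{\caA+\caA}$ forces $m_x\ge 1$, so the norm bound $\|X\|_2=\sqrt{m_x}$ holds surely. Everything else is the same plug-in computation as in the proof of \thref{theorem:main}.
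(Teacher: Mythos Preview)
Your proposal is correct and follows essentially the same approach as the paper: both arguments restrict to the $m_x$-dimensional subspace $\spa\{|e_{a,a+x}\>\}$, use the frame identity to verify isotropy and the explicit coordinates to bound $\|X\|_2^2 = m_x$, then invoke \coref{coro:isotropic} with $n=\mu=m_x\le\chi$. Your write-up is in fact slightly more careful about the bookkeeping (explicitly naming $\caV_x$ and noting $m_x\ge 1$), but there is no substantive difference.
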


    \begin{proof}
        Let $z\sim \{0,1,\ldots,d-1\}$ be a random variable. Therefore, the random vector $|v_{x,z}\>$ is isotropic on a Hilbert space of dimension $|\caA \cap \overline{\caA + x}|$:
        \begin{equation}
            \bbE_{z\sim {0,1,\ldots,d-1}}[|v_{x,z}\>\<v_{x,z}|] = \frac{1}{d}\sum_{z=0}^{d-1}|v_{x,z}\>\<v_{x,z}| = \sum_{a\in \caA \cap \overline{\caA + x}}|e_{a,a+x}\>\<e_{a,a+x}|.
        \end{equation}
        The norm of $|v_{x,z}\>$ is bounded by
        \begin{equation}
            \||v_{x,z}\>\|_2^2 = \sum_{a\in \caA \cap \overline{\caA + x}}1 = |\caA\cap\overline{\caA + x}|.
        \end{equation}
        Note that $|\caA\cap\overline{\caA + x}| \le |\caA| = \chi$. Hence, the lemma is proved by setting $N = \caO(\chi \log(\chi/\eta))$.
    \end{proof}

    A similar result holds for $\{(\bbI_{\Psi} - |e_I\>\<e_I|)|v_{0,z}\>\}_{z=0}^{d-1}$, except the full rank is now $|\caA| - 1 = \chi-1$. We will skip the details of the proof.

The random Pauli sampling is equivalent to first sampling $(x,z)\sim \bbF_2^N \times \bbF_2^N$, then letting $P = W_{x,z}$. Hence, the probability of having total rank $\chi^2-1$ is bounded by that of
\begin{equation}
    \forall x\in \overline{\caA + \caA}, \quad |\caS_x| = \Omega(\chi \log(\chi/\eta)).
\end{equation}
Certainly, we need $\Omega(d\chi\log(\chi/\eta))$ number of samples to achieve this with high probability.

\end{document}